\documentclass[final]{dmtcs-episciences}

\usepackage{style}

\title[On the End-Vertex Problem of Graph Searches]{On the End-Vertex Problem of Graph Searches
}
	
	\author[J.~Beisegel, C.~Denkert, E.~K\"ohler, M.~Krnc, N.~Piva\v c, R.~Scheffler, M.~Strehler]{Jesse Beisegel\affiliationmark{1} \and Carolin Denkert\affiliationmark{1} \and Ekkehard K\"ohler\affiliationmark{1} \and Matja\v{z} Krnc\affiliationmark{2,3}
   \\\and Nevena Piva\v c\affiliationmark{2}
    \and Robert Scheffler\affiliationmark{1}
  \and Martin Strehler\affiliationmark{1}
  }
	
  \affiliation{
	Brandenburg University of Technology, Cottbus, Germany \\
	University of Primorska, Koper, Slovenia \\
  	Faculty of Information Studies, Novo mesto, Slovenia
  }

	\keywords{end-vertex, graph search, maximum cardinality search, maximal neighborhood search, graph classes}
  \received{2018-10-31}
  \revised{2019-5-21}
  \accepted{2019-5-29}

\begin{document}

	\publicationdetails{21}{2019}{1}{13}{4937}
  \maketitle

\begin{abstract}
	End vertices of graph searches can exhibit strong structural properties and are crucial for many graph algorithms. The problem of deciding whether a given vertex of a graph is an end-vertex of a particular search was first introduced by Corneil, K\"ohler, and Lanlignel in 2010. There they showed that this problem is in fact \NP-complete for LBFS on weakly chordal graphs. A similar result for BFS was obtained by Charbit, Habib and Mamcarz in 2014. Here, we prove that the end-vertex problem is \NP-complete for MNS on weakly chordal graphs and for MCS on general graphs. Moreover, building on previous results, we show that this problem is linear for various searches on split and unit interval graphs.
\end{abstract} 	
  
\section{Introduction}

Graph search is one of the oldest and most fundamental algorithmic concepts  in both
graph theory and computer science. In essence, it is the systematic exploration of the
vertices in a graph, beginning at a chosen vertex and visiting every other vertex
of the graph such that a vertex is visited only if it is adjacent to a previously visited
vertex; this procedure is known as \emph{Generic Search}.

Such a general definition of a graph search leaves much freedom for a selection rule determining which vertex is chosen next. 
By restricting this choice with specific rules, various variants of graph
searches can be defined. Arguably the earliest examples are the \emph{Breadth First Search} (BFS) 
and the \emph{Depth First Search} (DFS), which were referred to in the context of maze traversals in the 19th century~\cite{even2011book}. 
Among the more sophisticated searches are, for example, \emph{Lexicographic Breadth First Search} (LBFS)~\cite{rose1976}  and 
\emph{Lexicographic Depth First Search} (LDFS)~\cite{corneil2008unified}. In this paper, we will study \emph{Maximum Cardinality Search} (MCS)~\cite{tarjan1984simple} and \emph{Maximal Neighborhood Search} (MNS)~\cite{corneil2008unified} in more detail.

Although being rather simple algorithms, the power of graph searches should not be underestimated. Using such simple procedures, a variety of important graph properties can be tested, many optimization problems can be simplified, and they are sub-routines in many more algorithms. 
Among such problems are topological ordering, finding connected components,
testing bipartiteness, computing the shortest paths with respect to the number of edges, 
and the Edmonds-Karp algorithm for computing the maximum flow in a network~\cite{edmonds1972theoretical}.

Usually, the outcome of a graph search is a \emph{search order}, i.e., a sequence of the 
vertices in order of visits. There are many results using such orders. For instance, by reversing an 
LBFS order of a chordal graph, one finds a perfect elimination order of this graph~\cite{rose1976}. 
A perfect elimination order in $G=(V,E)$ is an ordering of the vertices such that 
for every vertex $v \in V$ the neighbors of $v$ that occur after $v$ in the order form a clique. This not 
only yields a linear recognition algorithm for chordal graphs, but also a greedy coloring 
algorithm for finding a minimum coloring for this graph class~\cite{golumbic2004book}.
As most graph searching paradigms can be
implemented in linear time, these algorithms are typically as efficient as possible.

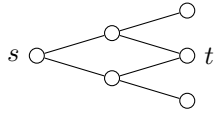
\begin{figure}
	\centering
	\begin{tikzpicture}[vertex/.style={inner sep=2pt,draw,circle},rotate=90, xscale=0.6]
	\node[vertex, label=right:$ t $] (1) at (0,0) {};
	\node[vertex] (2) at (-1,0) {};
	\node[vertex] (3) at (1,0) {};
	\node[vertex] (4) at (-0.5,1) {};
	\node[vertex] (5) at (0.5,1) {};
	\node[vertex, label=left:$ s $] (6) at (0,2) {};
	
	\draw[] (1) -- (4) -- (6) -- (5) -- (1);
  \draw[] (2) -- (4);
  \draw[] (3) -- (5);
	\end{tikzpicture}\caption{Vertex $t$ cannot be the end-vertex of any BFS although it is in the last layer if the BFS starts in $s$.}\label{fig:bfs}
\end{figure}

Interestingly, the end-vertices of graph searches, i.e. the last vertices visited in the search, are crucial for several algorithms. Their properties are the key for many multi-sweep algorithms on graphs. For instance, one can use six LBFS runs 
to construct the interval model of an interval graph. Here, the end-vertices correspond to the end vertices of the interval model 
and the next search starts in an end-vertex of the previous one. This also yields a linear time recognition algorithm for this 
graph class~\cite{corneil2009lbfs}.

Additionally, end-vertices may have strong structural properties. For example, as a direct consequence of the results mentioned above, the end-vertex of an LBFS on a chordal graph is always simplicial.
Moreover, if a cocomparability graph is hamiltonian, then the end-vertex of an LDFS is the start vertex of a hamiltonian path in this graph~\cite{corneil2013ldfs}. 

LBFS also provides a linear time algorithm for finding dominating pairs in connected asteroidal triple-free graphs~\cite{corneil99}. Here, a dominating pair is a pair of vertices such that every path connecting them is a dominating set in the graph. The first vertex $x$ is simply the end-vertex of an arbitrary LBFS and the second vertex $y$ is the end-vertex of an LBFS starting in $x$.
Furthermore, the end-vertex of an LBFS in a cocomparability graph is always a source/sink in some transitive
orientation of its complement~\cite{habib2000}.

The end-vertices of BFS are also helpful for fast diameter computation. Crescenzi et al.~\cite{crescenzi2013} have shown that the diameter of large real world graphs can usually be found with only a few applications of BFS. Intuitively, the  potential end-vertices can be seen as  peripheral or extremal vertices of the graph whereas all other vertices appear more centrally and are of greater importance to the connectivity. For example, no search can end on a cut vertex~\cite{charbit2014influence}. However, even for BFS finding such vertices is not as trivial as it might seem. Figure~\ref{fig:bfs} shows a graph in which the vertex $ t $ appears in the last layer of a BFS starting in $ s $. Nevertheless, it cannot be an end-vertex.

In this context, the decision problem arises, whether a vertex can be the \emph{end-vertex} of a graph search, i.e., the last vertex visited by this search.

\begin{problem}{End-vertex Problem}
	Instance: & A connected graph $ G=(V,E) $ and a vertex $ t \in V $. \\ 
	Task: & Decide whether there is a graph search such that $ t $ is the end-vertex of this search on $ G $.
\end{problem}

Obviously, this problem is in $ \NP $ for any of the searches considered here, since a full search order provides a certificate which can be checked in polynomial time. 

Although the problem of checking whether a given tree is a BFS or DFS tree can be decided efficiently (see~\cite{hagerup1985,korach1989,manber1990}), surprisingly this does not hold for the end-vertex problem. Corneil, K\"ohler, and Lanlignel~\cite{corneil2010end} have shown that it is \NP-hard to decide whether a vertex can be the end-vertex of an LBFS. Charbit, Habib, and Mamcarz~\cite{charbit2014influence} have shown that the end-vertex problems for BFS and DFS are also \NP-complete. Furthermore, they extended these results to several graph classes. In this paper, we address the end-vertex problem for MCS and MNS; for some other related results see Table \ref{tab:results}.

\paragraph{Our contribution}

In the following, we give an overview of graph searching algorithms, especially Maximum Cardinality Search and Maximal Neighborhood Search.
We present \NP-completeness results for the end-vertex problem of MNS on weakly chordal graphs and of MCS on general graphs. For some chordal graph classes we give linear time algorithms for several different graph searches. An overview of our results can be found in Table~\ref{tab:results}. We conclude the paper with some open problems related to our results.

\begin{table}[ht!]
	\centering\small
	\begin{tabular}{l c c c c c c}
		\toprule
                            &  BFS                              & LBFS                           & DFS                              & LDFS         & MCS       & MNS     \\ \midrule
		All Graphs              &  NPC                              & NPC                            & NPC                              & NPC          & \bfseries{NPC}   & \bfseries{NPC} \\ 
		Weakly Chordal          &  NPC \cite{charbit2014influence}  & NPC \cite{corneil2010end}      & NPC                              & NPC \cite{charbit2014influence}  & ?                & \bfseries{NPC}              \\
		Chordal                 &  ?                                & ?                              & NPC                              & ?            & ?                & P \cite{berry2010graph} $\to$ \textbf{L}              \\
		Interval                &  ?                                & L \cite{corneil2010end}        & \textbf{L}                                & ?            & ?                & P $\to$ \textbf{L}  \\
    Unit Interval           &  ?                       & L                              & \textbf{L}                       & \textbf{L}   & \bfseries{L}     & P $\to$ \textbf{L} \\
		Split                   &  L \cite{charbit2014influence}    & P \cite{charbit2014influence} $\to$ \textbf{L}  & NPC \cite{charbit2014influence}  & P \cite{charbit2014influence} $\to$ \textbf{L}  & \bfseries{L}   & P $\to$ \textbf{L}      \\ \bottomrule \addlinespace
	\end{tabular}
	\caption{Complexity of the end-vertex problem. Bolded results are from the presented paper. The big L stands for linear time algorithm. The term $P \to L$ describes the improvement from a polynomial algorithm to a linear time algorithm. Non-bolded results without references are direct consequences of other results, e.g., $\NPC$ of DFS on split graphs implies $\NPC$ on weakly chordal, chordal and general graphs. On the other hand, the linear time algorithm for interval graphs is also a linear time algorithm for the subclass of unit interval graphs.}\label{tab:results}
\end{table}

\section{Preliminaries}

All graphs considered in this article are finite, undirected, simple and connected. Given a graph $G=(V,E)$, we will denote by $n$ and $m$ the number of vertices and edges in $G$, respectively. Given a graph $G=(V,E)$ and a vertex $v\in V$, we will denote by $N(v)$ the \emph{neighborhood} of the vertex $v$, i.e., the set $N(v)=\{u\in V\mid uv\in E\}$, where an edge between $u$ and $v$ in $G$ is denoted by $uv\in E$. The \emph{closed neighborhood} of $v$ is the set $N[v]=N(v)\cup \{v\}$. A \emph{clique} in a graph $G$ is a set of pairwise adjacent vertices and an \emph{independent set} in $G$ is a set of pairwise nonadjacent vertices. If the neighborhood of a vertex $v$ in $G$ is a clique, then $v$ is said to be a \emph{simplicial vertex.}  The \emph{complement} of the graph $G$ is a graph $\overline{G}$ having the same set of vertices as $G$ where $xy$ for $x,y\in V$ is an edge of $\overline{G}$ if and only if it is not an edge in $G$.

Given a subset $S$ of vertices in $G$, we denote by $G[S]$ the \emph{subgraph of $G$ induced by $S$}, where $V(G[S])=S$ and $E(G[S])=\{xy\in E(G) \mid x\in S, y\in S\}$. By $G-S$ we denote the graph induced by $V(G)\setminus S$. By $P_n$ and $C_n$ we denote the path and the cycle on $n$ vertices, respectively. A \emph{claw} is a graph which consists of one vertex adjacent to exactly three vertices of degree one. A \emph{net} consists of a $C_3$, whose vertices are adjacent to three different vertices of degree one. For an arbitrary graph $H$, a graph $G$ is called $H$-free if it does not contain $H$ as an induced subgraph.

Given nonadjacent vertices $a$ and $b$ in graph $G$, a subset $C$ of vertices $V\setminus \{a,b\}$ is said to be an $(a,b)$-separator in $G$ if $a$ and $b$ are in distinct connected components of $G-C$. If no proper subset of $C$ is an $(a,b)$-separator in $G$, then $C$ is a minimal $(a,b)$-separator in $G$. Finally, $C$ is a minimal separator in $G$ if it is a minimal $(a,b)$-separator for some pair of nonadjacent vertices $a$ and $b$.

A graph $G$ that contains no induced cycle of length larger than $3$ is called \emph{chordal}. If neither $G$ nor its complement contain an induced cycle of length $5$ or more, then $G$ is \emph{weakly chordal}. A graph $G$ is said to be an \emph{interval graph} if its vertices can be represented as intervals on the real line, with two vertices being adjacent if and only if their corresponding intervals intersect. An interval graph $G$ is a \emph{unit interval graph} (or \emph{proper interval graph}) if all its intervals are of equal length. A split graph $G$ is a graph whose vertex set can be divided into sets $C$ and $I$ such that $C$ is a clique in $G$ and $I$ is an independent set in $G$. It is obvious that every unit interval graph is also an interval graph. Furthermore, it is easy to see that split graphs and interval graphs are chordal, whereas every chordal graph is also weakly chordal.

An ordering of vertices in $G$ is a bijection $\sigma: V(G) \rightarrow \{1,2,\dots,n\}$. For an arbitrary ordering $\sigma$ of vertices in $G$, we will denote by $\sigma(v)$ the position of vertex $v\in V(G)$. Given two vertices $u$ and $v$ in $G$ we say that $u$ is \emph{to the left} (resp. \emph{to the right}) of $v$ if $\sigma(u)<\sigma(v)$ (resp. $\sigma(u)>\sigma(v)$) and we will denote this by $u \prec_{\sigma}v$ (resp.  $u \succ_{\sigma}v$). A vertex $v$ is an \emph{end-vertex} of an ordering $\sigma$ of $G$ if $\sigma(v)=n$.

\section{Graph Searches}

In 1976 Rose, Tarjan and Lueker defined a linear time algorithm (Lex-P) which computes a perfect elimination ordering, if any exists, thus giving a recognition algorithm for chordal graphs~\cite{rose1976}. This algorithm, since named \emph{Lexicographic Breadth First Search}, or LBFS, exhibits many interesting structural properties and has been used as an ingredient in many other recognition and optimization algorithms. In the following we will present three graph searches which were derived from LBFS in different ways.

\emph{Maximum Cardinality Search} (MCS) was introduced in 1984 by Tarjan and Yannakakis~\cite{tarjan1984simple} as a simple alternative to LBFS for recognizing chordal graphs. They noticed that instead of remembering the \emph{order} in which previous neighbors of a vertex had appeared, it sufficed to just store the \emph{number} of previously visited neighbors for each vertex. This observation resulted in an algorithm which had a linear running time and which was very easy to implement.

\begin{algorithm}[ht!]
	\KwIn{Connected graph $G=(V,E)$ and a distinguished vertex $ s \in V $}
	\KwOut{A vertex ordering $ \sigma $}
	\Begin{	
		$\sigma(1)  \leftarrow  s $\;	
		\For{$ i \leftarrow 2 $ to $ n $}{pick an unnumbered vertex $ v $ with largest amount of numbered neighbors\;
			$ \sigma(i) \leftarrow v$\;}		
	}\caption{Maximum Cardinality Search}
	\label{mcs}
\end{algorithm}

In~\cite{corneil2008unified}, Corneil and Krueger defined \emph{Lexicographic Depth First Search} as a lexicographic analogue to DFS. Since then, it has been used for many applications, most notably to solve the minimum path cover problem on cocomparability graphs~\cite{corneil2013ldfs}.

\begin{algorithm}[ht!]
	\KwIn{Connected graph $G=(V,E)$ and a distinguished vertex $ s \in V $}
	\KwOut{A vertex ordering $ \sigma $}
	\Begin{
		$ label(s) \leftarrow \{0\} $\;
		
		\lFor{each vertex $v \in V-{s}$}{assign to $ v $ the empty label}
		
		\For{$ i \leftarrow 1 $ to $ n $}{pick an unnumbered vertex $ v $ with lexicographically largest label\;
			$ \sigma(i) \leftarrow v$\;
			\lForEach{unnumbered vertex $ w \in N(v) $}{prepend $ i $ to $ label(w) $}}		
	}\caption{LDFS}\index{LDFS}
	\label{ldfs}
\end{algorithm}

\emph{Maximal Neighborhood Search (MNS)} was introduced by Corneil and Krueger~\cite{corneil2008unified} in 2008 as a generalization of LBFS, LDFS and MCS. Instead of using strings (like LBFS) or integers (like MCS) the algorithm uses sets of integers as labels and the maximal labels are those sets which are inclusion maximal. Unlike the labels of LBFS and MCS the labels of MNS are not totally ordered and there can be many different maximal labels. Corneil and Krueger showed that every search ordering of LBFS and MCS is also an MNS ordering. This result was generalized in 2009 by Berry et al.~\cite{Berry2009mls} who showed that MNS orderings even contain all orderings of a generalization of the known graph searches called \emph{Maximal Label Search}.

\begin{algorithm}[ht!]
	\KwIn{Connected graph $G=(V,E)$ and a distinguished vertex $ s \in V $}
	\KwOut{A vertex ordering $ \sigma $}
	\Begin{
		$ label(s) \leftarrow  \{n+1\} $\;
		\lFor{each vertex $v \in V-{s}$}{assign to $ v $ the empty label}
		\For{$ i \leftarrow 1 $ to $ n $}{pick an unnumbered vertex $ v $ with maximal label under set inclusion\;
			$ \sigma(i) \leftarrow v$\;
			\lForEach{unnumbered vertex $ w $ adjacent to $ v $}{add $ i $ to $ label(w) $}	
		}	
	}\caption{Maximal Neighborhood Search}
	\label{algo:mns}
\end{algorithm}

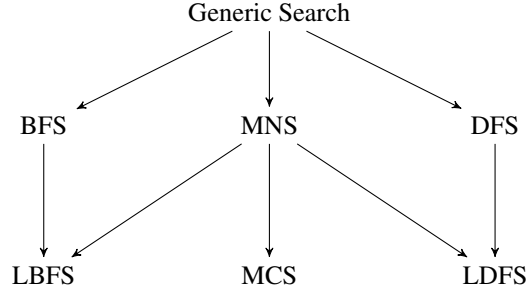
\begin{figure}[ht!]
	\centering
	\begin{tikzpicture}
	\node (gs) at (3,5.5) {Generic Search};%
	\node (bfs) at (0,4) {BFS};%
	\node (dfs) at (6,4) {DFS};%
	\node (mns) at (3,4) {MNS};%
	\node (mcs) at (3,2) {MCS};%
	\node (lbfs) at (0,2) {LBFS};%
	\node (ldfs) at (6,2) {LDFS};%
	\draw[-stealth'] (gs)--(bfs);%
	\draw[-stealth'] (gs)--(mns);%
	\draw[-stealth'] (gs)--(dfs);%
	\draw[-stealth'] (bfs)--(lbfs);%
	\draw[-stealth'] (mns)--(mcs);%
	\draw[-stealth'] (dfs)--(ldfs);%
	\draw[-stealth'] (mns)--(lbfs);%
	\draw[-stealth'] (mns)--(ldfs);%
	\end{tikzpicture}\caption{This figure represents a summary of the relationships between graph searches. The arrows represent proper inclusions. Thus, for example the arrow between BFS and LBFS implies that every LBFS is also a BFS. Searches on the same level are not comparable.}
\end{figure}

\section{\NP-Completeness for Maximal Neighborhood Search}

The complexity of the end-vertex problem of MNS was studied by Berry et al.~\cite{berry2010graph} in 2010 resulting in the following characterization.

\begin{lemma}\cite{berry2010graph}\label{mns:lemma1}
	Let $ G=(V,E) $ be a chordal graph and let $ t \in V $. Then $ t $ can be an end-vertex of MNS if and only if $t$ is simplicial and the minimal separators included in $N(t)$ are totally ordered by inclusion.
\end{lemma}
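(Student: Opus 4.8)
The plan is to prove both implications directly from the label mechanism of MNS, phrased through the known local (``four-point'') characterization of MNS orderings: an ordering $\sigma$ is an MNS ordering of $G$ exactly when, for every triple $a \prec_\sigma b \prec_\sigma c$ with $ac \in E$ and $ab \notin E$, there is a vertex $d \prec_\sigma b$ with $db \in E$ and $dt \notin E$ where $c$ plays the role of $t$ in the relevant application; equivalently, this says that at step $\sigma(b)$ the label of $b$ is not strictly contained in that of $c$, which is precisely when the search is allowed to pick $b$ before $c$. For the forward direction I would first recover simpliciality: since the reverse of every MNS ordering of a chordal graph is a perfect elimination ordering---the defining property shared by MCS and LBFS and inherited by their common generalization---the last visited vertex $t$ is the first vertex of a perfect elimination ordering and hence simplicial, so $N[t]$ is a clique.

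It then remains, for necessity, to show that two minimal separators $S, S' \subseteq N(t)$ cannot be incomparable. Assuming they were, I would fix $x \in S \setminus S'$ and $y \in S' \setminus S$, a vertex $p$ adjacent to $x$ in the full component of $G-S$ not containing $t$, and a vertex $q$ adjacent to $y$ in the full component of $G-S'$ not containing $t$; these witnesses exist because $S$ and $S'$ are minimal separators, so each separator vertex has a neighbour in every full component. One then records the forced adjacencies: $x,y \in N(t)$ give $xy, xt, yt \in E$; the components give $xp, yq \in E$ but $pt, qt \notin E$; and the separation forces $py \notin E$ and $qx \notin E$. The aim is to feed this configuration into the four-point condition with $c = t$, together with the final positions of $x,y,p,q$, and show that for every relative ordering of these vertices some required witness $d$ (a predecessor of the middle vertex, adjacent to it but not to $t$) is unavailable, contradicting that $t$ occupies position $n$.

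For the converse I would argue constructively. Writing the minimal separators inside $N(t)$ as a chain $S_1 \subseteq S_2 \subseteq \cdots \subseteq S_k$, I would exhibit a concrete MNS execution that visits $t$ last: process the graph \emph{from the outside in}, visiting the far components behind the separators in the order dictated by the chain and releasing the neighbours of $t$ in nondecreasing order of the smallest separator containing them (those in no separator being kept just before $t$), with $t$ itself held until the end. To certify that this is a genuine MNS ordering I would verify the four-point condition, and here the nesting of the $S_i$ is exactly what supplies the witness $d$ for every offending triple: because the separators are linearly ordered, the ``reasons'' for which each neighbour of $t$ accumulates label entries grow monotonically, so no neighbour of $t$ is ever forced to be strictly dominated by $t$ before the final step.

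In both directions the genuine obstacle is the separator-ordering condition rather than simpliciality. For necessity, turning incomparable separators into an actual violation of the search rule requires tracking the dynamic growth of the labels (equivalently, a careful case analysis of the four-point condition over all orderings of $x,y,p,q$), and I expect that the bare four-point condition is too weak on its own, so chordality and the precise meaning of \emph{full component} must enter. For sufficiency, the delicate point is proving that the proposed outside-in order really satisfies MNS at every step, i.e. that the label of $t$ never strictly dominates all surviving labels before position $n$; making this monotonicity argument rigorous is where I expect the bulk of the work to lie.
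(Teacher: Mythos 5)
First, a point of reference: the paper does not prove this lemma at all --- it is imported verbatim from Berry et al.\ \cite{berry2010graph} --- so there is no in-paper proof to compare against, and your proposal has to stand on its own. As a plan it is reasonable: the four-point (``flip'') characterization of MNS orderings is the right tool, the derivation of simpliciality from the fact that reversed MNS orderings of chordal graphs are perfect elimination orderings is correct, and the configuration $x\in S\setminus S'$, $y\in S'\setminus S$, $p$, $q$ with the adjacencies you list ($xy,xt,yt,xp,yq\in E$, $pt,qt,py,qx\notin E$) is a legitimate consequence of $t$ being simplicial and $S,S'$ being minimal separators with full components avoiding $t$.

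However, both core steps are deferred rather than carried out, and in each case the missing piece is exactly where the difficulty lives. For necessity, the four-point condition applied to a triple $a\prec_\sigma b\prec_\sigma t$ produces a witness $d$ that is merely \emph{some} predecessor of $b$ adjacent to $b$ but not to $t$; it need not be one of $x,y,p,q$. Hence a finite case analysis over the relative orderings of those four vertices cannot close the argument --- each application of the condition introduces a new unconstrained vertex, and you need an extremal or inductive device (e.g.\ choosing the last-visited vertex outside $N[t]$, or arguing about which component of $G-N[t]$ is completed last) to cut off the regress. You acknowledge this (``the bare four-point condition is too weak on its own'') but offer no mechanism to supply it. For sufficiency, the ``outside-in'' order is underspecified (how the far components behind the chain $S_1\subseteq\cdots\subseteq S_k$ are interleaved, and what happens with vertices of $N(t)$ lying in no minimal separator yet having neighbours outside $N[t]$), and the claim that label growth is ``monotone'' is asserted, not proved; moreover the four-point condition must be verified for \emph{all} triples $a\prec b\prec c$, not only those with $c=t$, which your sketch never addresses. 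There is also a small garble in your statement of the characterization ($dt\notin E$ should read $dc\notin E$). In short: correct skeleton, but the two load-bearing arguments are missing.
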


Since this property can be checked efficiently, they conclude that the end-vertex problem of MNS on chordal graphs is solvable in polynomial time. In Section~\ref{sec:mcs_split} we provide an approach which solves this problem in linear time (see Corollary~\ref{polynomial:lemma1}).

Charbit et al.~\cite{charbit2014influence} conjectured that the problem can be solved efficiently on general graphs. However, in this section we will present an \NP-completeness proof for the end-vertex-problem of MNS on weakly chordal graphs.

\begin{theorem}\label{theo:mns_npc}
The end-vertex-problem of MNS is \NP-complete for weakly chordal graphs.
\end{theorem}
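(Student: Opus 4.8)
The membership of the problem in $\NP$ has already been observed (a complete MNS ordering with $t$ placed last is a certificate that can be checked in polynomial time), so it remains to establish $\NP$-hardness. My plan is to reduce from 3-SAT: given a formula $\phi$ with variables $x_1,\dots,x_n$ and clauses $C_1,\dots,C_m$, I would construct a weakly chordal graph $G_\phi$ together with a distinguished vertex $t$ such that some MNS ordering of $G_\phi$ ends in $t$ if and only if $\phi$ is satisfiable. One point that must be confronted at the outset is that the LBFS construction of Corneil, K\"ohler and Lanlignel cannot simply be recycled: since every LBFS ordering is an MNS ordering but not conversely, a graph in which $t$ fails to be an LBFS-end-vertex may nevertheless admit an MNS ordering ending in $t$. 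The reduction therefore has to be engineered so that the additional freedom of MNS, namely choosing \emph{any} inclusion-maximal label rather than the lexicographically largest one, cannot be exploited to push $t$ to the end unless a satisfying assignment is encoded. A bounded-occurrence variant of 3-SAT may be convenient to keep degrees under control and simplify the weak-chordality check.

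The combinatorial engine of the argument is that $t$ can be the end-vertex of an MNS exactly when there is an ordering in which $t$ never becomes the unique inclusion-maximal vertex before the final step; equivalently, at each step before the last, some vertex other than $t$ must carry a label that is not strictly contained in the current label of $t$. Guided by this, I would attach to each variable $x_i$ a gadget containing two literal vertices $x_i$ and $\overline{x_i}$, arranged so that at the relevant stage the search must commit to visiting one of them before the other; this choice records the truth value of $x_i$. For each clause $C_j$ I would add a clause vertex $c_j$ adjacent to the three literal vertices occurring in $C_j$, designed so that $c_j$ can be kept from becoming forced ahead of $t$ only if at least one of its literals has already been visited, i.e.\ only if $C_j$ is satisfied. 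All of these connections would be realized through cliques and complete bipartite joins, so that by construction no induced cycle or anti-cycle of length at least five can be created.

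For the completeness (forward) direction I would, from a satisfying assignment, exhibit an explicit MNS ordering: fix a suitable start vertex, traverse the variable gadgets visiting first the literal prescribed by the assignment, insert each clause vertex $c_j$ as soon as one of its satisfied literals has been reached, and finish with $t$. For the soundness (reverse) direction I would argue that any MNS ordering ending in $t$ induces a consistent truth assignment — the relative order of the two literal vertices of $x_i$ dictates its value — and that such an ordering could avoid being forced onto a clause vertex, and ultimately onto $t$, before step $n$ only if every clause had one of its literals visited in time, yielding a satisfying assignment.

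The main obstacle will be precisely this soundness direction, again because of the permissiveness of MNS: I must verify that the label dynamics admit no \emph{escape route}, i.e.\ no cleverly chosen MNS order that ends in $t$ while leaving some clause unsatisfied, and this has to hold for every admissible start vertex rather than only the intended one. Making the gadgets rigid enough to preclude such spurious orderings while simultaneously keeping $G_\phi$ weakly chordal is the delicate balance of the construction; the attendant proof that $G_\phi$ has no long induced hole or antihole is then a finite but careful case analysis over the few ways in which vertices of different gadgets can interact.
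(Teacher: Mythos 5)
Your proposal correctly identifies the overall strategy (reduction from 3-SAT, membership in $\NP$ via the ordering certificate, one pair of literal vertices per variable, one vertex per clause, and the need to control the extra freedom MNS has over LBFS), and your characterization of when $t$ is forced --- namely, exactly when every other unvisited vertex has a label strictly contained in $t$'s --- is the right lens. However, the proposal stops at the point where the actual work happens, and the one concrete design decision you do commit to points in the wrong direction. You make each clause vertex $c_j$ adjacent to the three literal vertices occurring in $C_j$, intending that $c_j$ ``can be kept from becoming forced ahead of $t$ only if at least one of its literals has already been visited.'' In MNS a vertex can only become forced (or block $t$) by acquiring a \emph{large} label; a clause vertex adjacent only to its own three literals can never dominate anything, and visiting a satisfied literal would \emph{increase} $c_j$'s label rather than defuse it. The paper's construction inverts this: the literal vertices induce the \emph{complement} of the perfect matching $x_i\overline{x}_i$, and each $c_j$ is adjacent to every literal vertex \emph{except} those of its own clause. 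Then a clause left unsatisfied by the chosen assignment is adjacent to \emph{all} of the first $k+1$ visited vertices, so its label strictly contains the label of every remaining vertex and it must be taken immediately --- which is what kills $t$ as an end-vertex.

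The second missing ingredient is the mechanism that converts ``a clause vertex was taken too early'' into ``$t$ cannot be last.'' The paper adds guard vertices $s$ and $b$ with $N(s)\subset N(t)$ and $N(b)\subset N(t)$, where $b$ sees all literal vertices but no clause vertex, and the single extra edge $bt$. This yields two forcing lemmas (any clause vertex before $b$, or $b$ before $s$, makes $t$'s label a strict superset of $b$'s or $s$'s label forever after), which pin down the first $k+1$ vertices as $s$ plus exactly one literal per variable and make both directions of the equivalence short. Your proposal has no analogue of these guards, and without them the soundness direction --- which you rightly flag as the main obstacle --- has no handle. Finally, the claim that building the gadgets from cliques and complete joins guarantees weak chordality ``by construction'' is not true in general; the paper needs a genuine (if short) case analysis over holes and antiholes in $G(\mathcal{I})$ and $\overline{G(\mathcal{I})}$, and that argument depends on the specific adjacencies above. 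As written, the proposal is a plausible research plan rather than a proof.
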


To prove this we use a reduction from 3-SAT. Let $\cal{I}$ be an instance of 3-SAT. We construct the corresponding graph $G(\mathcal{I})$ as follows (see Figure~\ref{fig:mns_end_vertex} for an example). Let $ X=\{x_1, \dots, x_k,\overline{x}_1,\ldots,\overline{x}_k\} $ be the set of vertices representing the literals of $ \mathcal{I} $. The edge-set $ E(X) $ forms the complement of the matching in which $x_i$ is matched to $ \overline{x}_i $ for every $ i \in \{1,\ldots, k\} $. Let $ C = \{c_1, \ldots ,c_l\} $ be the set of clause vertices representing the clauses of $ \mathcal{I} $. The set $ C $ is independent in $G(\mathcal{I})$ and every $ c_i $ is adjacent to every vertex of $ X $, apart from those representing the literals of the clause associated with $ c_i $ for every $ i \in \{1, \ldots ,l \} $. Additionally, we add the vertices $s$, $b$ and $t$. The vertex $b$ is adjacent to all literal vertices. The vertices $s$ and $t$ are adjacent to all literal and all clause vertices. Finally, we add the edge $bt$.

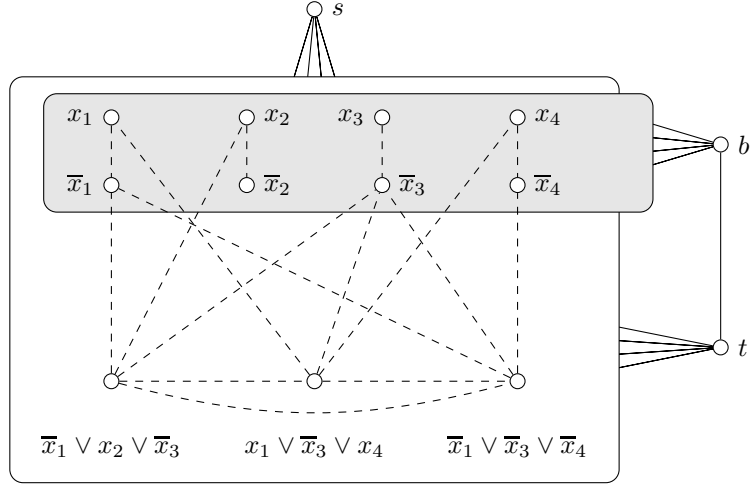
\begin{figure}
	\centering
 	\begin{tikzpicture}[scale=0.90, vertex/.style={inner sep=2pt,draw,circle, fill=white},
 	noedge/.style={dashed}
 	]
 	
 	\draw[rounded corners=5pt]
 	(-2.5,3) rectangle (6.5,9);
 	\draw[rounded corners=5pt, fill=black!10!white]
 	(-2,7) rectangle (7,8.75);

 	\node[vertex,label={0:$s$}] (s) at (2,10) {};
 	\node[vertex,label={180:$x_1$}] (x1) at (-1, 8.4) {};
 	\node[vertex,label={180:$\overline{x}_1$}] (nx1) at (-1, 7.4) {};
 	\node[vertex,label={0:$x_2$}] (x2) at (1, 8.4) {};
 	\node[vertex,label={0:$\overline{x}_2$}] (nx2) at (1, 7.4) {};
 	\node[vertex,label={180:$x_3$}] (x3) at (3, 8.4) {};
 	\node[vertex,label={0:$\overline{x}_3$}] (nx3) at (3, 7.4) {};
 	\node[vertex,label={0:$x_4$}] (x4) at (5, 8.4) {};
 	\node[vertex,label={0:$\overline{x}_4$}] (nx4) at (5, 7.4) {};
 	
 	\node[vertex,label={[label distance=0.5cm]270:$\overline{x}_1 \lor x_2 \lor \overline{x}_3$}] (c1) at (-1, 4.5) {};
 	\node[vertex,label={[label distance=0.5cm]270:$x_1 \lor \overline{x}_3 \lor x_4$}] (c2) at (2,4.5) {};
 	\node[vertex,label={[label distance=0.5cm]270:$\overline{x}_1 \lor \overline{x}_3 \lor \overline{x}_4$}] (c3) at (5, 4.5) {};
 	
 	\node[vertex,label={0:$t$}] (t) at (8,5) {};
 	\node[vertex,label={0:$b$}] (b) at (8,8) {};
 	
 	\draw[noedge] (x1) -- (nx1);
 	\draw[noedge] (x2) -- (nx2);
 	\draw[noedge] (x3) -- (nx3);
 	\draw[noedge] (x4) -- (nx4);

 	\draw[noedge] (c1) -- (nx1);
 	\draw[noedge] (c1) -- (x2);
 	\draw[noedge] (c1) -- (nx3);
 	\draw[noedge] (c2) -- (x1);
 	\draw[noedge] (c2) -- (nx3);
 	\draw[noedge] (c2) -- (x4);
 	\draw[noedge] (c3) -- (nx1);
 	\draw[noedge] (c3) -- (nx3);
 	\draw[noedge] (c3) -- (nx4);
 	
 	\draw[noedge] (c1) -- (c2);
 	\draw[noedge] (c2) -- (c3);
 	\draw[noedge, bend angle=15, bend right] (c1) to (c3);

 	\draw[] (s) -- (1.7,9) -- (s)--(2.1,9)--(s)--(2.3,9)--(s)--(1.9,9);
 	
 	\draw[] (t) -- (6.5,4.9)--(t)--(6.5,4.7)--(t)--(6.5,5.1)--(t)--(6.5,5.3);
 	
 	\draw[] (b) -- (7,8.1) --(b)--(7,7.9)--(b)--(7,7.7)--(b)--(7,8.3);
 	
 	\draw[] (b) -- (t);
 	
 	\end{tikzpicture}
 	\caption{The \NP-completeness reduction for the end-vertex problem of MNS on weakly chordal graphs. The depicted graph is $G(\mathcal{I})$ for $\mathcal{I} = (\overline{x}_1 \lor x_2 \lor \overline{x}_3) \land (x_1 \lor \overline{x}_3 \lor x_4) \land (\overline{x}_1 \lor \overline{x}_3 \lor \overline{x}_4)$. In both boxes only non-edges are displayed by dashed lines. The connection of a vertex with a box means that the vertex is adjacent to all vertices in the box.}\label{fig:mns_end_vertex}
\end{figure}

The following two lemmas provide some properties that an MNS ordering must fulfill if $t$ is its end-vertex. Let $\mathcal{I}$ be an arbitrary instance of $3$-SAT and $G=G(\mathcal{I})$ with $n=|V(G)|$. 

\begin{lemma}\label{lemma:mns_ordering_bsclause} 
Let $\sigma$ be an MNS ordering of vertices in $G$, ending with $t$. Then:
\begin{enumerate}
\item The vertex $b$ is on the left of any clause vertex in ordering $\sigma$;
\item The vertex $s$ is on the left of $b$ in ordering $\sigma$.
\end{enumerate}
\end{lemma}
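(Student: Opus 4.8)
The plan is to reduce both statements to a single necessary condition satisfied by every MNS ordering, which follows directly from the label-maximality rule of Algorithm~\ref{algo:mns}: if $\sigma$ is an MNS ordering and $a \prec_\sigma b \prec_\sigma c$ with $ac \in E$ and $ab \notin E$, then there exists a vertex $d \prec_\sigma b$ with $db \in E$ and $dc \notin E$. To justify this I would look at the step at which $b$ is chosen. At that moment $c$ is still unnumbered, and $a$ is an already-numbered neighbor of $c$ that is not a neighbor of $b$, so the label of $c$ is not contained in the label of $b$. Since $b$ is nonetheless selected, its label cannot be a proper subset of the label of $c$, and hence $b$ must already have a numbered neighbor $d$ that is not adjacent to $c$. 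I would either state this as a preliminary fact or cite it as the known MNS ordering characterization, emphasizing that the witness $d$ is only required to precede $b$ (not $a$).

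Everything then rests on one structural observation about $G=G(\mathcal{I})$: the vertex $t$ is adjacent to every literal vertex, every clause vertex, and to $b$, so $N(t)=V\setminus\{s,t\}$. In other words, \emph{$s$ is the unique non-neighbor of $t$}. Consequently the only vertex $d$ with $dt\notin E$ is $s$ itself; and since $sb\notin E$, there is in fact \emph{no} vertex that is simultaneously adjacent to $b$ and non-adjacent to $t$, and likewise none adjacent to $s$ and non-adjacent to $t$. I also record the trivial facts, following from $t$ being the end-vertex, that $b\prec_\sigma t$, that $s\prec_\sigma t$, and that every clause vertex precedes $t$.

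With these in hand both parts become one-line contradictions. For part~(1), suppose some clause vertex $c$ satisfies $c\prec_\sigma b$; then $c\prec_\sigma b\prec_\sigma t$, and since $ct\in E$ but $cb\notin E$, the necessary condition applied to $(c,b,t)$ produces $d\prec_\sigma b$ with $db\in E$ and $dt\notin E$, forcing $d=s$ and hence $sb\in E$, a contradiction. For part~(2), suppose $b\prec_\sigma s$; then $b\prec_\sigma s\prec_\sigma t$, and since $bt\in E$ but $bs\notin E$, the condition applied to $(b,s,t)$ produces $d\prec_\sigma s$ with $ds\in E$ and $dt\notin E$, again forcing $d=s$, which contradicts $d\prec_\sigma s$. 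This establishes $b$ before every clause vertex and $s\prec_\sigma b$, respectively.

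The main obstacle is entirely in the setup rather than the execution: one must pin down the correct MNS ordering condition (in particular the quantifier placement $d\prec_\sigma b$) and carefully verify the adjacencies of $s$, $b$, and $t$ against the construction, above all the fact that $t$ is adjacent to all of $V$ except $s$. Once these are fixed, the two parts are immediate applications of the condition to the triples $(c,b,t)$ and $(b,s,t)$, both exploiting that the unique non-neighbor of $t$ is $s$ and that $s$ is adjacent to neither $b$ nor itself.
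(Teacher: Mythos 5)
Your proof is correct and rests on exactly the same facts as the paper's: $N(s)\subseteq N(t)$, $N(b)\setminus\{t\}\subseteq N(t)$, $sb\notin E$, and $C\cap N(b)=\emptyset$, so that once a clause vertex precedes $b$ (or $b$ precedes $s$) the label of $t$ strictly dominates and $t$ cannot be last. The only difference is presentational: you package the label-containment argument as the standard four-point characterization of MNS orderings (correctly derived, with the witness $d$ only required to precede the middle vertex) applied to the triples $(c,b,t)$ and $(b,s,t)$, whereas the paper argues directly that the label of $b$ (resp.\ $s$) remains a proper subset of the label of $t$ for the remainder of the search.
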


\begin{proof}
We prove both claims separately. Assume that there is a clause vertex $c_i$ that is visited before the vertex $b$, i.e., $c_i \prec_{\sigma} b$. By the construction we know that $N(b)\subset N(t)$ and $C\subseteq N(t)$ while $N(b)\cap C=\emptyset$. If vertex $c_i$ is visited in $\sigma$ before the vertex $b$, then the label of vertex $t$ will contain the label $c_i$, while for $b$ this is not the case. This implies that at any step of the search process in $G$, the label set of $b$ will be a proper subset of the label set of $t$, and the algorithm will take vertex $t$ before $b$.

Assume now that the vertex $b$ is on the left of $s$ in the ordering $\sigma$. Since $N(s) \subset N(t)$ and $b$ is in the neighborhood of $t$ and $s$ is not, the label set of $s$ will be a proper subset of the label set of $t$. Thus, the search algorithm will visit $t$ before $s$.
\end{proof}

\begin{lemma}\label{lemma:mns_assignment}
Let $\sigma$ be an MNS ordering of vertices in $G$, ending in $t$. Then the first $k+1$ vertices in $\sigma$ are $s$, as well as an arbitrary assignment of $\I$ (not necessarily satisfying).
\end{lemma}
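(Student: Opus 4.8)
The plan is to track, step by step, which still‑unnumbered vertices can carry an inclusion‑maximal label, exploiting that by Lemma~\ref{lemma:mns_ordering_bsclause} the vertex $b$ and all clause vertices lie to the right of $s$, while $t$ is the end‑vertex. The decisive structural feature is that $s$ is adjacent to every literal vertex and every clause vertex, whereas the only non‑adjacency inside $X$ is between a literal $x_i$ and its negation $\overline{x}_i$. Consequently, at any moment before $s$ is numbered the label of $s$ records the positions of \emph{all} already numbered literal vertices, so it strictly contains the label of the partner of any already numbered literal, with exactly one index to spare.

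First I would show that $\sigma(i)$ is a literal vertex for every position $i$ strictly before $s$ and that these literals come from pairwise distinct complementary pairs. By Lemma~\ref{lemma:mns_ordering_bsclause} no clause vertex and no $b$ can precede $s$, and $t$ is last, so only literal vertices can occur there. To see that no two complementary literals both occur, suppose $x_i$ has already been numbered while $s$ is still unnumbered; then the label of $\overline{x}_i$ equals the label of $s$ with the position of $x_i$ removed (they share all already numbered literal‑neighbors except $x_i$, and neither has yet seen a clause). Hence $\overline{x}_i$ is strictly dominated by the unnumbered $s$ and can never be picked while $s$ waits. This forces the literals before $s$ to realize an assignment on a subset of the variables, and in particular $s$ occupies at most position $k+1$.

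Next I would analyze the positions between $s$ and $k+1$. Once $s$ is numbered at some position $p\le k+1$, every still‑unnumbered literal and clause receives $p$ in its label, so a literal from an untouched pair now carries the full set $\{1,\dots,p\}$ of indices. Against this I compare the remaining candidates: the partner $\overline{x}_i$ of an already numbered literal still misses the position of $x_i$; the vertex $b$ misses position $p$ because $b\notin N(s)$; and every clause vertex is barred by Lemma~\ref{lemma:mns_ordering_bsclause}, since (as $b$ is dominated and thus still unnumbered) it would otherwise precede $b$. Therefore each of the positions $p+1,\dots,k+1$ must again be a fresh literal from a hitherto untouched pair, and after $k$ literals in total all $k$ pairs are covered exactly once. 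Collecting the two phases, the first $k+1$ vertices of $\sigma$ are precisely $s$ together with one literal from each pair, i.e.\ $s$ and an assignment of $\mathcal{I}$.

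The main obstacle is the careful bookkeeping of the inclusion order of labels, and in particular keeping the two regimes apart: \emph{before} $s$ is numbered the vertex $b$ also carries a full literal‑label and cannot be excluded by domination, so it must be ruled out through Lemma~\ref{lemma:mns_ordering_bsclause}; only \emph{after} $s$ is numbered does $b$ fall behind, by missing position $p$, and become excludable directly. Making precise that at every relevant step a fresh literal is the only type of maximal vertex compatible with $t$ being the end‑vertex, and that such a fresh literal always exists as long as fewer than $k$ pairs have been touched, is the delicate heart of the argument; everything else reduces to the elementary fact that a complementary literal's label always lags exactly one index behind that of a fresh literal.
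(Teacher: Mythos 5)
Your proof is correct and takes essentially the same route as the paper's: both exclude $b$ and the clause vertices via Lemma~\ref{lemma:mns_ordering_bsclause}, exclude the partner of an already-chosen literal by label inclusion against a fresh literal (or against $s$), and conclude inductively that only $s$ and fresh literals can occupy the first $k+1$ positions. Your split into the regimes before and after $s$ is numbered is just a reorganization of the paper's single invariant, which already distinguishes these two cases when ruling out $b$.
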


\begin{proof}
It follows from Lemma~\ref{lemma:mns_ordering_bsclause} that the first vertex in $\sigma$ is $s$ or one of the literal vertices. Assume that we are at some step of an MNS search and until now we have only chosen at most one literal vertex per variable, possibly including the vertex $s$. Further assume that there is at least one variable whose two literal vertices have not been chosen so far. These vertices are adjacent to all the vertices which have been chosen before. Lemma~\ref{lemma:mns_ordering_bsclause} implies that a clause vertex cannot be next in the MNS ordering. For the same reason, if $s$ has not been visited so far, we cannot take $b$ next. If $s$ has already been chosen, then the labels of literal vertices of the non-chosen variables are proper supersets of the label of $b$; again we cannot choose the vertex $b$. The same holds for literal vertices of variables, for which the other literal vertex has already been visited. Thus, we know that we have to take $s$ or a literal vertex of an unvisited variable. Notice that each of these vertices is a possible choice in an MNS, since they all have maximum label. This proves the statement. 
\end{proof}

So far we have proven some necessary conditions that have to be satisfied in order for $t$ to be an MNS end-vertex in $G$. In what follows, we will prove that $t$ can be the end-vertex of MNS in $G$ whenever the corresponding $3$-SAT instance $ \I $ has a satisfying assignment.

\begin{lemma}\label{lemma:mns_satisfying}
If the 3-SAT instance $\I$ has a satisfying assignment $\A$, then $ t $ is an end-vertex of MNS on $ G $.
\end{lemma}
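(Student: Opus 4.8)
The plan is to prove this sufficiency direction by \emph{exhibiting} one concrete MNS ordering $\sigma$ of $G$ that ends in $t$, and then checking, block by block, that each vertex is selected with an inclusion-maximal label. Fix a satisfying assignment $\A$, let $L\subseteq X$ be the $k$ literal vertices set to true by $\A$ (one per variable) and $\overline L=X\setminus L$ the complementary literal vertices. I would let $\sigma$ visit the vertices in the order of the blocks
\[
 s,\qquad L,\qquad b,\qquad \overline L,\qquad C,\qquad t,
\]
the internal order within each block being arbitrary. This candidate already respects the necessary conditions of Lemmas~\ref{lemma:mns_ordering_bsclause} and~\ref{lemma:mns_assignment}: $s$ comes first, the first $k+1$ vertices form an assignment, and $b$ precedes every clause vertex.

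For the verification I would exploit that $b$ and $t$ are adjacent to all of $X$ while $s$ is adjacent to neither $b$ nor $t$, so the label of any vertex is simply the set of positions of its already-numbered neighbours and can be described as ``all used positions minus a short exception set''. Taking the blocks in turn: $s$ alone carries $\{n+1\}$ and is forced first, after which every literal and clause vertex carries $\{1\}$. While $L$ is processed, each still-unvisited true literal is adjacent to $s$ and to all previously visited true literals (these are pairwise adjacent), hence carries the full set $\{1,\dots,i\}$ of used positions and is trivially maximal. While $\overline L$ is processed, an unvisited false literal $\overline\ell$ is adjacent to every numbered vertex except its partner $\ell\in L$, so its label is (all used positions)$\setminus\{\sigma(\ell)\}$; it contains $\sigma(b)$, which no clause label ever contains, and it contains $1$, which $t$'s label never contains, and it is incomparable with the labels of the other unvisited false literals, so it is maximal. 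Finally, while $C$ is processed, each clause vertex is non-adjacent only to $b$ and to its own three literals, so its label is $\{1\}$ together with the positions of all literals outside that clause; it contains $1$ (hence is not dominated by $t$), distinct clauses have incomparable labels, and clause vertices are pairwise non-adjacent, so each one is maximal; then $t$ remains alone at the end.

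The only genuinely delicate step — and precisely where satisfiability of $\A$ enters — is inserting $b$ right after block $L$. At that moment $b$ carries exactly the positions of the true literals, namely $\{2,\dots,k+1\}$, and this set must be inclusion-maximal. Now $t$ carries the identical set, and every false literal omits the position of its true partner, so the only possible obstruction is a clause vertex $c_i$, whose label is $\{1\}$ together with the positions of the true literals \emph{not} occurring in clause $i$. If clause $i$ contained no true literal this would equal $\{1,\dots,k+1\}\supsetneq\{2,\dots,k+1\}$ and $b$ would be blocked; but since $\A$ is satisfying, clause $i$ contains a true literal, so its label omits at least one position of $L$ and therefore cannot contain $b$'s label. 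Hence $b$ is a legal MNS choice, and this is the crux of the argument. I expect everything else to be routine once each label is written in the ``all used positions minus an explicit exception'' form, so I would concentrate the effort on making the $b$-step watertight.
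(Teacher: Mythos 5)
Your proposal is correct and follows essentially the same route as the paper's proof: start at $s$, visit the literals of the satisfying assignment, insert $b$ (using satisfiability exactly as you do, to guarantee every clause label misses a true-literal position and hence cannot dominate $b$'s label), and then exhaust the remaining vertices before $t$, all of which contain the label of $s$ while $t$ does not. Your version merely fixes a concrete order for the tail blocks and spells out the maximality checks in more detail than the paper does.
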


\begin{proof}
Let $\I$ be an instance of $3$-SAT and $\A$ a satisfying assignment of $\I$. We now construct the MNS ordering which ends in $t$.

From Lemma~\ref{lemma:mns_assignment} it follows that we can start an MNS search on $G$ in the vertex $s$ and then take all the literal vertices that belong to $\A$. Since each clause vertex is not adjacent to its corresponding literals, it follows that for each clause vertex there is at least one label missing among the labels produced by the assignment vertices. The same holds for the unvisited literal vertices, since their negated literal label is missing. Furthermore, since $b$ is adjacent to all literal vertices, labels of unvisited literal vertices as well as of clause vertices do not contain the label set of vertex $b$. Therefore, we can take $b$ as the next vertex. Since all remaining literal vertices and clause vertices are adjacent to $s$, while $t$ is not, these remaining vertices can be visited before $t$.
\end{proof}

We now show that $t$ cannot be the end-vertex of an MNS if there is no satisfying assignment of \I.

\begin{lemma}\label{lemma:mns_non_satisfying}
If the 3-SAT instance $\I$ has no satisfying assignment, then $t$ cannot be the end-vertex of MNS on $G$.
\end{lemma}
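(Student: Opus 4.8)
The plan is to prove the contrapositive in spirit: assuming $t$ is an end-vertex of some MNS ordering $\sigma$, I will extract a satisfying assignment of $\mathcal{I}$, which contradicts the hypothesis. By Lemma~\ref{lemma:mns_assignment}, any MNS ordering ending in $t$ must begin with $s$ followed by exactly $k$ literal vertices forming a (not necessarily satisfying) assignment $\mathcal{A}$ of $\mathcal{I}$, choosing exactly one of $x_i,\overline{x}_i$ for each variable. So the first step is to name this assignment $\mathcal{A}$ determined by the first $k+1$ vertices of $\sigma$, and the goal becomes showing that $\mathcal{A}$ must in fact satisfy every clause.

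The heart of the argument concerns what happens immediately after these $k+1$ vertices are placed. At that point every vertex still has a maximal label among the remaining unvisited vertices only if its label is inclusion-maximal; the key is to track the labels of the vertex $b$ versus the clause vertices $c_i$. After the assignment vertices have been visited, $b$ has received a label entry from each of the $k$ literal vertices (since $b$ is adjacent to all literals) plus possibly from $s$. A clause vertex $c_i$, by contrast, is \emph{not} adjacent to the literals appearing in its clause, so $c_i$ fails to receive a label entry from any assignment-literal that happens to satisfy $c_i$. Thus I would compare, entry by entry, the label of $b$ with the label of each clause vertex $c_i$ relative to the positions of the first $k+1$ vertices.

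The crucial claim to establish is the following: if $\mathcal{A}$ fails to satisfy some clause $c_j$, then at the step after the $(k+1)$-st vertex, the label of $c_j$ is a \emph{superset} of (indeed contains) the label of $b$, so that MNS may be forced, or at least permitted, to proceed in a way that blocks $t$. More precisely, when $\mathcal{A}$ does not satisfy $c_j$, every literal chosen into $\mathcal{A}$ is adjacent to $c_j$ (because none of $c_j$'s own literals was chosen), so $c_j$ received a label entry from each of the $k$ assignment vertices, exactly like $b$. I would then argue that by Lemma~\ref{lemma:mns_ordering_bsclause} the vertex $b$ must precede every clause vertex, yet the unsatisfied $c_j$ now has a label that is not dominated by $b$'s label — and in fact, because $c_j$ is nonadjacent to $t$ on the very literals that $t$ is adjacent to, one can force $c_j$ to end up on the right of $b$ in a manner incompatible with $t$ being last. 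The clean way to close this is to show that any legal MNS continuation from $s,\mathcal{A}$ with $\mathcal{A}$ non-satisfying forces some clause vertex to attain a label equal to or exceeding that of $t$ before $t$ can be selected, contradicting that $t$ is the end-vertex.

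The main obstacle I anticipate is the bookkeeping of label inclusions among $b$, the clause vertices, and $t$ at the decisive step, since MNS compares labels only by set inclusion and not by cardinality, so several labels may be simultaneously maximal and incomparable. The delicate point is to rule out every ordering in which an unsatisfied clause vertex could still be deferred past $t$: I must show not merely that such a deferral is \emph{not forced}, but that it is \emph{impossible} while keeping $t$ last, by exhibiting for each candidate continuation a vertex whose label strictly contains $t$'s whenever the unsatisfied clause vertex is postponed. Handling the interaction with the vertex $b$ — which must precede all clause vertices but whose own label eventually becomes comparable to that of an unsatisfied clause vertex — will require the most care, and I expect the proof to hinge on a careful case distinction on whether the problematic clause vertex is taken before or after $b$.
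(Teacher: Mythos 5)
Your overall strategy is the same as the paper's: invoke Lemma~\ref{lemma:mns_assignment} to pin down the first $k+1$ vertices as $s$ plus an assignment $\mathcal{A}$, observe that an unsatisfied clause vertex has been labelled by all of them, and derive a contradiction via Lemma~\ref{lemma:mns_ordering_bsclause}. However, you stop short of the one observation that actually closes the argument, and in its place you anticipate a case analysis that is neither needed nor, as described, sufficient. The decisive point is that at step $k+2$ the label of an unsatisfied clause vertex $c_j$ is the \emph{full} set $\{1,\dots,k+1\}$, and this \emph{properly} contains the label of every other unvisited vertex: each remaining literal vertex misses the position of its negation (which lies in $\mathcal{A}$), and both $b$ and $t$ miss position $1$ because neither is adjacent to $s$; the other clause vertices carry subsets of $\{1,\dots,k+1\}$. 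Hence the only inclusion-maximal labels belong to fully-labelled clause vertices, and MNS is \emph{forced} to take one of them as the $(k+2)$-nd vertex --- before $b$ --- which directly contradicts part~1 of Lemma~\ref{lemma:mns_ordering_bsclause}. Your hedge ``MNS may be forced, or at least permitted'' is the gap: ``permitted'' yields nothing, since an adversarial continuation could then still end in $t$; you must show the choice is forced, and the label comparison above does exactly that.

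Relatedly, your worry about ``several labels being simultaneously maximal and incomparable'' and your proposed ``case distinction on whether the problematic clause vertex is taken before or after $b$'' indicate you have not seen that the argument terminates at step $k+2$. There is no case in which $c_j$ is taken after $b$: Lemma~\ref{lemma:mns_ordering_bsclause} already says every clause vertex must come after $b$ in any ordering ending in $t$, while the forcing argument says some clause vertex must come immediately, i.e.\ before $b$. These two facts are the entire contradiction; no tracking of labels beyond step $k+1$, and no comparison against $t$'s label later in the search, is required.
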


\begin{proof}
Let $\I$ be an instance of $3$-SAT which does not have a satisfying assignment. Suppose that there exists an MNS ordering $\sigma$ of $G$ which ends in $t$. It follows from Lemma~\ref{lemma:mns_assignment} that MNS has to take the vertex $s$ and an arbitrary assignment of $\I$ at the beginning. Since the assignment that was chosen is not satisfying, there is at least one clause vertex that has been labeled by all vertices chosen so far. Hence, the labels of these clause vertices properly contain the labels of all remaining literal vertices, as well as the labels of $b$ and $t$. As a result, MNS has to take one of these clause vertices next. Lemma~\ref{lemma:mns_ordering_bsclause} implies that then $t$ cannot be an end-vertex, since one clause vertex was chosen before $b$. 
\end{proof}

To complete the proof of Theorem~\ref{theo:mns_npc} we have to show, that $G(\I)$ is weakly chordal for any choice of $\I$.

\begin{lemma}\label{lemma:mns_weakly_chordal}
The graph $G(\mathcal{I})$ is weakly chordal for any instance $ \I $ of 3-SAT.
\end{lemma}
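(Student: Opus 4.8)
The plan is to verify the definition of weak chordality directly: I will show that neither $G=G(\I)$ nor its complement $\overline{G}$ contains an induced cycle of length at least five. The guiding principle throughout is that a vertex which is adjacent (or, in $\overline G$, nonadjacent) to almost all other vertices can sit on an induced cycle of length $\ge 5$ only in very restricted ways, since on such a cycle every vertex has exactly two neighbours. First I would record the adjacencies explicitly: $t$ is adjacent to everything except $s$; $s$ is adjacent to everything except $b$ and $t$; $b$ is adjacent to $X\cup\{t\}$ only; each literal is nonadjacent precisely to its partner and to the clause vertices of the clauses containing it; and $C$ is independent.

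For $G$ itself I would peel off the three special vertices one at a time. If an induced cycle $Q$ contains $t$, then since $t$ is adjacent to every vertex of $V(Q)\setminus\{t\}$ except possibly $s$, the two-neighbour condition forces $|V(Q)|\le 4$; the same argument eliminates $s$ (whose only possible cycle non-neighbour, once $t$ is excluded, is $b$) and then $b$ (all of whose cycle neighbours are literals, so a cycle through $b$ has the form $b,x,c,x',b$ or shorter, as two clause vertices can never be consecutive). This reduces matters to $G[X\cup C]$. Here the two structural facts I would exploit are: (i) each literal is nonadjacent to at most one other literal, so at most four literals lie on $Q$; and (ii) the two literal cycle-neighbours of any clause vertex must be mutually nonadjacent (else a chord), hence a matched pair. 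Since one matched pair can flank at most one clause vertex (otherwise $Q$ collapses to a $4$-cycle), and four literals contain at most two disjoint matched pairs, at most two clause vertices lie on $Q$.

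With at most four literals and at most two clauses, the only surviving lengths are $5$ and $6$, leaving the three configurations $(p,q)\in\{(1,4),(2,3),(2,4)\}$ of clauses and literals; I expect this finite case check to be the main (though routine) obstacle. The case $(2,3)$ dies at once because two clauses need two disjoint matched pairs, impossible among three literals. For $(1,4)$ and $(2,4)$ I would simply count edges: writing down the forced adjacencies among the chosen literals and clause vertices shows that the induced subgraph on them carries strictly more than $m$ edges (for $(2,4)$ the six vertices already span eight edges, whereas $C_6$ has six), so they cannot induce a cycle; equivalently one exhibits a literal forced to be the matching-partner of two distinct literals.

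The complement side is easier. In $\overline G$ the vertex $t$ is pendant (adjacent only to $s$) and $s$ has degree two with $t$ as a neighbour, so neither can lie on any cycle; $b$ is adjacent only to $s$ and to all of $C$, and since $C$ is a clique in $\overline G$, two clause cycle-neighbours of $b$ would be adjacent, again forbidding $b$. Thus any induced cycle of $\overline G$ lies in $\overline G[X\cup C]$, where $C$ is a clique (hence at most two clause vertices, consecutive if two) and $X$ is a perfect matching (hence any path using only literal edges has at most one edge). A short analysis of the at-most-two-clause configurations then shows that at most two literals can appear, capping the cycle length at four. Combining the two halves shows that neither $G$ nor $\overline G$ contains an induced $C_{\ge 5}$, which is exactly weak chordality.
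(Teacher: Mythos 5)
Your proposal is correct and follows essentially the same route as the paper's proof: both verify the definition directly by first excluding $t$, $s$ and $b$ from any long induced cycle of $G(\I)$ and of its complement, and then exploiting the facts that the cycle-neighbours of a clause vertex must be a nonadjacent (hence matched) pair of literals and that $C$ is a clique in $\overline{G(\I)}$. The only differences are in the combinatorial endgame for $G(\I)$ itself (the paper shows at least two clause vertices must appear and finds a chord among their literal neighbours, while you bound literals by four and clauses by two and finish by a finite case check with edge counting), which amounts to the same argument in slightly different packaging.
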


\begin{proof}
By contradiction assume that there exists an induced cycle of length $\geq 5$ in $G(\mathcal{I})$ or its complement. We start with $G(\mathcal{I})$. Vertex $t$ cannot be part of such a cycle, since there is only one vertex which is not adjacent to $t$. Thus, $s$ also cannot be part of such a cycle, since it is adjacent to all vertices but $t$ and $b$. 
Note, if the cycle contains four or more vertices that are not clause vertices there must be a chord in the cycle. Therefore, the cycle contains at least two clause vertices $c_i$ and $c_j$. The neighbors of $c_i$ and $c_j$ in the cycle are literal vertices. If the neighbors are four different vertices, then there exists at least four edges between them. Therefore, there must be a chord. If both share one neighbor then there is also a chord between these shared neighbor and one of the other two. The clause vertices cannot share both neighbors, since the cycle has more than four vertices. Thus, there is no induced cycle with more than four vertices in $G(\mathcal{I})$ .

Consider now $\overline{G(\mathcal{I})}$. Since $t$ is only adjacent to $s$, it is not part of a cycle. The same holds for $s$, since it is only adjacent to $t$ and $b$. As the clause vertices build a clique, there are at most two of them in a cycle which must be consecutive. Between literal vertices of different variables and between $b$ and a literal vertex must lie at least one clause vertex in the cycle. This leads to a contradiction, since we need at least two non-consecutive clause vertices.
\end{proof}

Theorem~\ref{theo:mns_npc} follows from the Lemmas~\ref{lemma:mns_satisfying},~\ref{lemma:mns_non_satisfying} and~\ref{lemma:mns_weakly_chordal}.


\section{\NP-Completeness for Maximum Cardinality Search}

To the best of our knowledge, the end-vertex problem for Maximum Cardinality Search has not been studied in the literature. It is easy to see that for trees the end-vertices correspond exactly to the leaves. 

However, in this section, we will prove that the end-vertex problem is $\NP$-complete on general graphs, by giving a reduction from 3-SAT.

\begin{theorem}\label{mcs-end-vertex}
	The MCS end-vertex problem is $\NP$-complete.
\end{theorem}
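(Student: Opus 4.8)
The plan is to follow the same template used for the MNS result in the previous section, constructing a reduction from 3-SAT. Since the statement asserts NP-completeness on general graphs (not a restricted class), I have more freedom in the gadget design, and crucially I no longer need to verify that the construction is weakly chordal, which removes one of the technical lemmas. First I would note that membership in \NP is already granted by the remark in the introduction: a full MCS order is a polynomial certificate. So the entire task reduces to exhibiting a polynomial-time reduction from 3-SAT to the MCS end-vertex problem.

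The key conceptual difference from MNS is that MCS distinguishes vertices only by the \emph{count} of already-visited neighbors, not by \emph{which} neighbors. This is both a simplification and a danger: the clean ``proper subset of labels'' arguments from Lemma~\ref{lemma:mns_ordering_bsclause} no longer work, because a clause vertex and the target $t$ might have equal counts even when their visited-neighbor \emph{sets} differ. Therefore I would design a gadget in which the counting forces the search, in its opening phase, to pick exactly one literal vertex per variable (an assignment), and in which a clause vertex only overtakes $t$ in the count precisely when that clause is left unsatisfied. The guiding principle: the distinguished vertex $t$ should be adjacent to almost everything so that its count stays maximal as long as the chosen partial assignment keeps satisfying all clauses, but a clause vertex should surge ahead in count the moment all of its literals have been set false. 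To make the numerics robust against ties, I would likely attach padding vertices or use weighted-looking cliques (realized by many parallel degree-one or twin vertices) so that the counts of the ``intended'' next choice strictly dominate the counts of premature choices, rather than merely tying them.

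The main steps, in order, would be: (i) describe the construction $G(\mathcal{I})$ with literal vertices, clause vertices, the start vertex $s$, the target $t$, and whatever counting-control gadgetry (padding cliques/independent sets) is needed to break ties in MCS; (ii) prove a structural lemma analogous to Lemma~\ref{lemma:mns_assignment} showing that any MCS order ending in $t$ must begin with $s$ followed by a legal truth assignment, because the counts force one literal per variable and forbid taking a clause vertex or $t$ too early; (iii) prove the ``if'' direction, that a satisfying assignment yields an MCS order ending in $t$, by exhibiting the explicit order (satisfying assignment first, then the remaining vertices, then $t$ last, checking at the critical step that $t$'s count is maximal); and (iv) prove the contrapositive ``only if'' direction, that if no satisfying assignment exists then after any assignment-prefix some unsatisfied clause vertex has accumulated the full count from all its non-neighbors and must be selected before $t$, so that $t$ cannot be last.

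I expect the main obstacle to be step~(ii) combined with the tie-breaking in the counting argument. Under MCS the selection rule is ``largest number of numbered neighbors,'' with ties broken arbitrarily, so I must ensure that the \emph{forced} behavior holds for \emph{every} permissible tie-breaking, which means engineering the degrees so that the desired vertex has a \emph{strictly} larger count than any vertex I want to exclude at each decision point. This is delicate: the count of $t$ grows as the search proceeds, and I must arrange that the count of each clause vertex stays strictly below $t$'s count exactly until its final literal is falsified, and strictly above $t$'s count immediately thereafter. Getting these inequalities to hold simultaneously for all clauses and all stages is the real work of the proof, and the padding gadgets are what make the inequalities strict rather than tied; verifying them is where the bulk of the careful bookkeeping lies.
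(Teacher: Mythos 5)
Your high-level plan (reduction from 3-SAT, literal and clause gadgets, a forced ``assignment prefix,'' clause counts surging when a clause is falsified) matches the paper's strategy, but two of your design decisions point in directions that would make the proof fail, and one essential idea is missing. First, you propose making $t$ ``adjacent to almost everything so that its count stays maximal.'' This is the opposite of what is needed: if $t$'s count is near-maximal throughout, MCS may be forced to select $t$ early even when $\mathcal{I}$ is satisfiable, destroying the ``if'' direction. The paper's construction makes $t$ adjacent \emph{only} to the clause vertices, and keeps every clause vertex's count low (at most $3$) until the very end, so that $t$'s count stays at $0$ for almost the entire search and $t$ can legitimately be deferred to last.

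Second, and more seriously, your ``only if'' direction is a non sequitur as stated: showing that an unsatisfied clause vertex ``must be selected before $t$'' does not show that $t$ cannot be last --- $t$ only needs to be selected after everything else, and a clause vertex being numbered earlier is perfectly consistent with that. What you actually need is a mechanism that forces $t$ itself to be numbered while other vertices still remain unvisited. The paper achieves this with a large clique $K$ that is reachable only through connector vertices attached to the literals of the \emph{last} variable: the search cannot enter $K$ until it has walked through one literal per variable (a full assignment), and if that assignment is unsatisfying, some clause triangle reaches count $3$ and must be consumed before $K$ is entered, which in turn drives $t$'s count to $3$ and forces $t$ to be numbered while all of $K$ is still unvisited. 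Without such a ``reservoir'' of vertices guaranteed to come after $t$, your argument cannot conclude. Relatedly, the step in which counting forces ``one literal per variable'' needs a concrete chain gadget (the paper links consecutive variables by auxiliary vertices so that, once one endpoint of a literal edge is numbered, its partner has the unique maximal count); you defer this entirely to unspecified padding, but it is a load-bearing part of the construction, not mere bookkeeping.
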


For each instance $ \I $ of 3-SAT we construct a corresponding graph $G(\I)$ as follows (see Figures~\ref{mcs:figure2}, \ref{mcs:figure1} and \ref{mcs:figure5} for an example): Each literal is represented by an edge, and each clause by a triangle. The triangles representing the clauses together form a clique $ C $ of size $ 3l $. We also define start vertices $ s $ and $ s' $, where $ s $ is adjacent to all vertices of $ x_1  $ and $ \overline{x}_1 $ and $ s' $ is just adjacent to $ s $. Two consecutive literals $ x_{j-1} $ and $ x_{j} $ are connected using two auxiliary vertices in the way described in Figure~\ref{mcs:figure1}. For each literal one of its two vertices is adjacent to all three vertices of each clause which contains the negation of that literal, as depicted in Figure~\ref{mcs:figure5}.

Additionally, the graph contains a clique $ K $ with $3(4k+8(k-1)) + 4$ vertices, i.e., three vertices for each literal and auxiliary vertex, as well as 4 connector vertices. Every vertex among the literal vertices and the auxiliary vertices is adjacent to exactly 3 vertices in $ K $ such that every vertex in $ K $ apart from the 4 connector vertices are adjacent to exactly one vertex outside of $ K $. Two of the connector vertices are then completely connected to all vertices of $ x_k $ and the other two to all vertices of $ \overline{x}_k $.

Finally, we add a vertex $ t $, for which we wish to decide whether it is an end-vertex, and this is adjacent to all clause vertices.

\begin{figure}
	\centering
	\begin{tikzpicture}[
	edge/.style={-},
	noedge/.style={dashed},
	tree/.style={snake}, auxiliary/.style={inner sep =1pt,draw,circle},
	vertex/.style={inner sep=2pt,draw,circle},
	dvertex/.style={draw,ellipse,minimum height=30,minimum width=10,fill=white},
	auxiliary/.style={inner sep =1pt,draw,circle,fill=white},
	cvertex/.style={draw,rectangle,rounded corners=4pt,minimum height=10,minimum width=30,fill=gray,fill opacity=0.2}
	]
	
	
	\draw (5.75,6.5) ellipse (5cm and 1.15cm);

	\draw[rounded corners=5pt,fill=gray,fill opacity=0.2]
	(1.25,1) rectangle (10.25,5);	
	
	\coordinate (f0) at (10.25,3);
	\coordinate (f1) at (11.25,3.5);
	\coordinate (f2) at (11.25,3);
	\coordinate (f3) at (11.25,2.5);
	\draw (f0) -- (f1);
	\draw (f0) -- (f2);
	\draw (f0) -- (f3);	
	
	\draw[draw,fill=white] (12,3) ellipse (1cm and 1.5cm);
	\node[] (C) at (12,3) {$K$};
	
	\node[vertex,label=above:$s'$] (s) at (0,3) {};
	\node[vertex, label=above:$s$] (0) at (0.7,3) {};
	
	
	\coordinate (c1) at (2,4) {};
	\coordinate (c2) at (2,2) {};
	\coordinate (c3) at (3.5,4) {};
	\coordinate (c4) at (3.5,2) {};
	\coordinate (c5) at (5,4) {};
	\coordinate (c6) at (5,2) {};
	\coordinate (c7) at (8,4) {};
	\coordinate (c8) at (8,2) {};
	\coordinate (c9) at (9.5,4) {};
	\coordinate (c10) at (9.5,2) {};

	\draw[double,thick] (c3)--(c1)--(c4)--(c6)--(c3);
	\draw[double,thick] (c4)--(c2)--(c3)--(c5)--(c4);
	\draw[double, thick] (c10)--(c7);
	\draw[double, thick] (c7)--(c9)--(c8)--(c10);

	\node[dvertex,label=above:$x_1$] (1) at (2,4) {};
	\node[dvertex,label=below:$\overline{x}_1$] (2) at (2,2) {};

	\node[dvertex,label=above:$x_2$] (3) at (3.5,4) {};
	\node[dvertex,label=below:$\overline{x}_2$] (4) at (3.5,2) {};
	
	\node[dvertex,label=above:$x_3$] (5) at (5,4) {};
	\node[dvertex,label=below:$\overline{x}_3$] (6) at (5,2) {};
	
	\node[] (58) at (6.5,4) {$ \ldots $};
	\node[] (59) at (6.5,2) {$ \ldots $};
	
	\node[dvertex,label=above:$x_{k-1}$] (7) at (8,4) {};
	\node[dvertex,label=below:$\overline{x}_{k-1}$] (8) at (8,2) {};
	
	\node[dvertex,label=above:$x_{k}$] (9) at (9.5,4) {};
	\node[dvertex,label=below:$\overline{x}_{k}$] (10) at (9.5,2) {};
	
	\node[auxiliary] (11) at (2,4.2) {};
	\node[auxiliary] (12) at (2,3.8) {};
	
	\node[auxiliary] (13) at (2,2.2) {};
	\node[auxiliary] (14) at (2,1.8) {};
	
	\node[auxiliary] (15) at (9.5,4.2) {};
	\node[auxiliary] (16) at (9.5,3.8) {};
	
	\node[auxiliary] (17) at (9.5,2.2) {};
	\node[auxiliary] (18) at (9.5,1.8) {};
	
	\node[auxiliary] (19) at (11.5,4) {};
	\node[auxiliary] (20) at (11.4,3.6) {};
	
	\node[auxiliary] (21) at (11.4,2.4) {};
	\node[auxiliary] (22) at (11.5,2) {};
	
	
	\node[cvertex,label=below left:$c_{1}$] (23) at (2,6.5) {};
	\node[cvertex,label=below left:$c_{2}$] (24) at (3.5,6.5) {};
	\node[cvertex,label=below left:$c_{3}$] (25) at (5,6.5) {};
	\node[cvertex,label=below left:$c_{l-1}$] (26) at (8,6.5) {};
	\node[cvertex,label=below left:$c_{l}$] (27) at (9.5,6.5) {};

	\node[] (60) at (6.5,6.5) {$ \ldots $};
	
	
	\node[auxiliary] (28) at (1.8,6.5) {};
	\node[auxiliary] (29) at (2,6.5) {};
	\node[auxiliary] (30) at (2.2,6.5) {};
	
	\node[auxiliary] (31) at (3.3,6.5) {};
	\node[auxiliary] (32) at (3.5,6.5) {};
	\node[auxiliary] (33) at (3.7,6.5) {};
	
	\node[auxiliary] (34) at (4.8,6.5) {};
	\node[auxiliary] (35) at (5,6.5) {};
	\node[auxiliary] (36) at (5.2,6.5) {};

	\node[auxiliary] (37) at (7.8,6.5) {};
	\node[auxiliary] (38) at (8,6.5) {};
	\node[auxiliary] (39) at (8.2,6.5) {};

	\node[auxiliary] (40) at (9.3,6.5) {};
	\node[auxiliary] (41) at (9.5,6.5) {};
	\node[auxiliary] (42) at (9.7,6.5) {};
	
	
	\coordinate (43) at (1.8,5.0);
	\coordinate (44) at (2,5.0);
	\coordinate (45) at (2.2,5.0);
	
	\coordinate (46) at (3.3,5.0);
	\coordinate (47) at (3.5,5.0);
	\coordinate (48) at (3.7,5.0);
	
	\coordinate (49) at (4.8,5.0);
	\coordinate (50) at (5,5.0);
	\coordinate (51) at (5.2,5.0);
	
	\coordinate (52) at (7.8,5.0);
	\coordinate (53) at (8,5.0);
	\coordinate (54) at (8.2,5.0);
	
	\coordinate (55) at (9.3,5.0);
	\coordinate (56) at (9.5,5.0);
	\coordinate (57) at (9.7,5.0);
	
	\node[vertex,label=above:$t$] (t) at (6,7) {};
	
	\draw[] (s)--(0)--(11)--(12)--(0)--(13)--(14)--(0);

	\draw[] (15)--(16)--(19)--(15)--(20)--(16)--(20)--(19);
	\draw[] (17)--(18)--(21)--(17)--(22)--(18)--(22)--(21);
	
	
	\draw[] (43)--(23)--(44)--(23)--(45);
	\draw[] (46)--(24)--(47)--(24)--(48);
	\draw[] (49)--(25)--(50)--(25)--(51);
	\draw[] (52)--(26)--(53)--(26)--(54);
	\draw[] (55)--(27)--(56)--(27)--(57);
	
	\end{tikzpicture}
	\caption{This represents a general construction of $ G(\I) $ for an arbitrary instance $ \I $. The double edges denote a construction linking the various literals and are explained in Figure~\ref{mcs:figure1}}\label{mcs:figure2}
\end{figure}
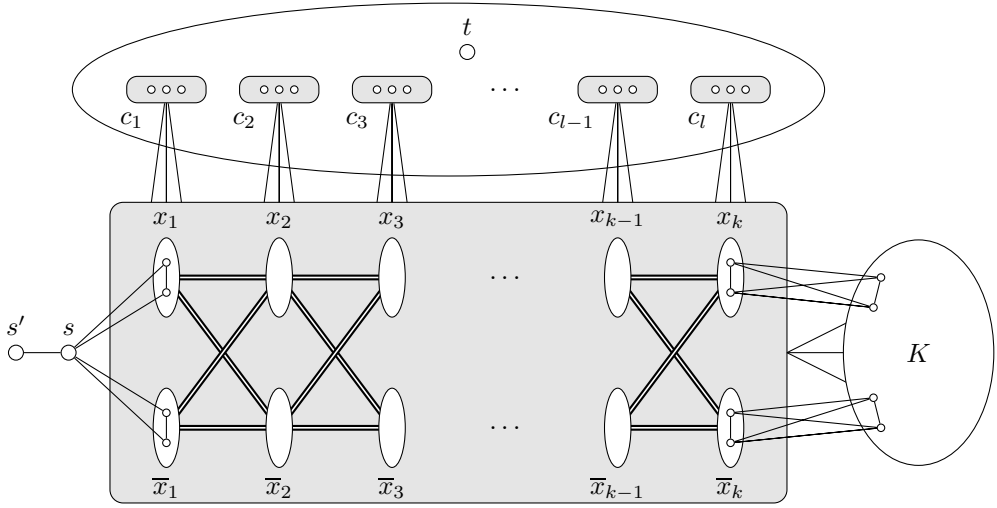

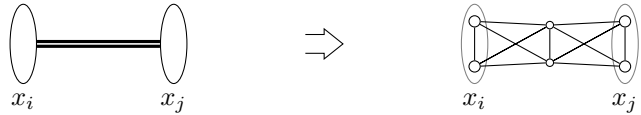
\begin{figure}
	\centering
	\begin{tikzpicture}[
		vertex/.style={inner sep=1.5pt,draw,circle},
		dvertex/.style={draw,ellipse,minimum height=30,minimum width=10},
		auxiliary/.style={inner sep =1pt,draw,circle}
		]
		
		\node[dvertex,label=below:$x_i$] (1) at (0,0) {};
		\node[dvertex,label=below:$x_j$] (2) at (2,0) {};
		
		\draw[double, very thick] (1)--(2);

	\begin{scope}[xshift=4cm]	
  \tikzarrow{black}{0}{0}{1}
	\end{scope}

	\begin{scope}[xshift=6cm]
	
	\node[dvertex,label=below:$x_i$,opacity=0.5] (1) at (0,0) {};
	\node[dvertex,label=below:$x_j$,opacity=0.5] (2) at (2,0) {};
	
	\node[vertex] (1) at (0,0.3) {};
	\node[vertex] (2) at (0,-0.3) {};
	
	\node[vertex] (3) at (2,0.3) {};
	\node[vertex] (4) at (2,-0.3) {};
		
	\node[auxiliary] (5) at (1,0.25) {};
	\node[auxiliary] (6) at (1,-0.25) {};
		
	\draw[] (1)--(2)--(5)--(1)--(6)--(2)--(5)--(6)--(3)--(5)--(4)--(6)--(3)--(4);
	
	\end{scope}
	
	\end{tikzpicture}
	\caption{Each double edge in the construction shown in Figure~\ref{mcs:figure2} is to be replaced by the construction on the right-hand side. Furthermore, each node, in particular the auxiliary vertices in the middle, is assigned to three exclusive neighbors in the clique $K$.}\label{mcs:figure1}
\end{figure}

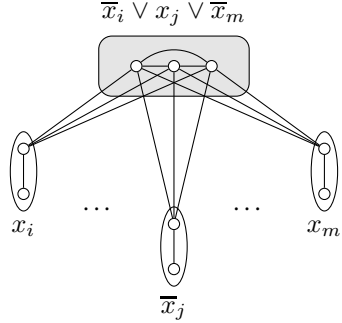
\begin{figure}
	\centering
	\begin{tikzpicture}[
		vertex/.style={inner sep=1.5pt,draw,circle,fill=white},
		dvertex/.style={draw,ellipse,minimum height=30,minimum width=10},
		auxiliary/.style={inner sep =1pt,draw,circle}]
	
	\begin{scope}[yshift=1cm]
	 \node at (0,1.1) {$\overline{x}_i \vee x_j \vee \overline{x}_m$};
	\draw[rounded corners=5pt,fill=gray,fill opacity=0.2]
	(-1,0) rectangle (1,0.8);	
	\node[vertex] (a) at (-0.5,0.4) {};
	\node[vertex] (b) at (0,0.4) {};
	\node[vertex] (c) at (0.5,0.4) {};
	\end{scope}

	\begin{scope}[xshift=-2cm,yshift=0cm]
	\node[dvertex,label=below:$x_i$] (1) at (0,0) {};
	\node[vertex] (11) at (0,-0.3) {};
	\node[vertex] (12) at (0, 0.3) {};
	\end{scope}
	
	\begin{scope}[xshift=-0cm,yshift=-1cm]
	\node[dvertex,label=below:$\overline{x}_j$] (2) at (0,0) {};
	\node[vertex] (21) at (0,-0.3) {};
	\node[vertex] (22) at (0, 0.3) {};
	\end{scope}
		
	\begin{scope}[xshift=2cm,yshift=0cm]
	\node[dvertex,label=below:$x_m$] (3) at (0,0) {};
	\node[vertex] (31) at (0,-0.3) {};
	\node[vertex] (32) at (0, 0.3) {};
	\end{scope}

	\node at (-1,-0.5) {\dots};
	\node at (1,-0.5) {\dots};

	\draw (11) -- (12);
	\draw (21) -- (22);
	\draw (31) -- (32);
	
	\draw (12) -- (a);
	\draw (12) -- (b);
	\draw (12) -- (c);
	\draw (22) -- (a);
	\draw (22) -- (b);
	\draw (22) -- (c);
	\draw (32) -- (a);
	\draw (32) -- (b);
	\draw (32) -- (c);
	
	\draw (a) -- (b) -- (c) [bend right=40] to (a);

	\end{tikzpicture}
	\caption{Each clause block consists of three nodes and each of these nodes is connected to one specific node of the corresponding literals.}\label{mcs:figure5}
\end{figure}

\begin{lemma}
	If $ \I $ has a satisfying assignment $ \A $, then there is a maximum cardinality search on $ G(\I) $ that ends in $ t $.
\end{lemma}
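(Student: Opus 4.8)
The plan is to exhibit, from the satisfying assignment $\A$, one explicit MCS search order on $G(\I)$ whose last vertex is $t$, and then to verify that at every step the prescribed next vertex indeed carries a maximum label (the largest number of already numbered neighbours), so that it is a legal MCS choice. I would organize the order into four phases: (P0) start at $s'$ and then $s$ (this is forced, since $s'$ has only the neighbour $s$); (P1) a \emph{true path} that, variable by variable from $x_1$ to $x_k$, numbers the two vertices of the literal made \emph{true} by $\A$ together with the auxiliary vertices of the corresponding link gadgets; (P2) enter the clique $K$ through its connector vertices and number all of $K$; (P3) number the remaining (false) literal vertices and the leftover auxiliary vertices; and finally (P4) number the whole clause clique $C$ and, as the very last vertex, $t$.

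The heart of the argument is the behaviour of the clause vertices. Since $t$ is adjacent only to $C$, its label stays $0$ until the first clause vertex is numbered, so $t$ can never be forced during (P0)--(P3). Each clause vertex of a clause $c=(\ell_1\vee\ell_2\vee\ell_3)$ is adjacent exactly to the three special vertices representing $\overline{\ell_1},\overline{\ell_2},\overline{\ell_3}$, to the rest of $C$, and to $t$; hence while no vertex of $C$ has been numbered its label equals the number of those three special vertices already visited. I would use the satisfiability of $\A$ precisely here: because $\A$ satisfies $c$, some $\ell_i$ is true, so $\overline{\ell_i}$ is false and its special vertex is not touched during (P1). Consequently every clause vertex has label at most $2$ throughout (P1), which is what keeps it from ever becoming the unique maximum and forcing the search prematurely into $C$.

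The clique $K$ does the remaining bookkeeping. Once the true literal of $x_k$ has been numbered in (P1), the two connector vertices attached to it reach label $2$ and, being part of the clique $K$, trigger a cascade: every newly numbered vertex of $K$ gains a label equal to the number of previously numbered $K$-vertices, which quickly exceeds $3$ and dominates all competitors, so all of $K$ can be exhausted in (P2) while clause labels remain bounded by $3$. After (P2) every literal and auxiliary vertex still unnumbered is adjacent to three already numbered vertices of $K$, hence has label at least $3$, so in (P3) these vertices tie with or beat the clause vertices (whose labels are at most $3$) and may be chosen first by an appropriate tie-break. When (P4) begins, the only unnumbered vertices are $C$ and $t$; each clause vertex then has its three special neighbours numbered and so leads $t$ by exactly $3$ at every step of the clique traversal of $C$, forcing all of $C$ to be numbered before $t$, which is therefore the end-vertex.

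The main obstacle I anticipate is the step-by-step verification that the \emph{true path} of (P1) is a legal MCS: one must compute, inside the link gadget of Figure~\ref{mcs:figure1} and the initial gadget at $s$, the labels of the intended next vertex and of all of its rivals (the other literal of the same variable, the not-yet-reached literals, the connectors, the $K$-vertices, and the clause vertices) and confirm that the intended vertex attains the maximum. This is a finite but delicate case analysis that relies on the exact wiring of the gadgets and on the clause bound of at most $2$ established above; the tuning of the size of $K$ and of the number of connector vertices is exactly what makes these inequalities hold. The remaining phases (P2)--(P4) reduce to the clique cascade and the ``lead by $3$'' invariant, which are routine once (P1) is in place.
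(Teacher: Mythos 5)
Your proposal is correct and follows essentially the same route as the paper's proof: start at $s'$, $s$, trace the satisfying assignment through the literal and link gadgets (using satisfiability to cap clause labels at $2$), enter and exhaust $K$ via the connectors, then finish the remaining literal/auxiliary vertices (label $\geq 3$ from $K$) before the clause vertices (label $\leq 3$) and finally $t$. The delicate gadget-level label bookkeeping you flag in (P1) is left at a comparable level of detail in the paper itself.
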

\begin{proof}
	Let $ \A = (b_1,b_2, \ldots, b_l)$ be a satisfying assignment of $ \I $. We can construct a corresponding MCS order $ \sigma $ ending in $ t $ as follows: As $ \sigma(1) $ we use the root $ s' $ and $ \sigma(2) = s $. If $ b_1=1 $ we choose the vertices of $ x_1 $ next; if $ b_1 = 0 $ we choose the vertices of $ \overline{x}_1 $. In the next step we use the construction described in Figure~\ref{mcs:figure1} to choose either $ x_2 $ or $ \overline{x}_2 $, depending on the value of $ b_2 $. We proceed in this way, until we have visited every literal given by $ \A $, that is until we reach $ x_k $ or $ \overline{x}_k $. This is possible, because we have chosen a satisfying assignment and the label of every clause vertex is at most 2, while there is always an auxiliary vertex or a literal vertex with label that is at least 2.
	
	At the point where we have visited both vertices in the $ k $-th literal, one pair of connector vertices of the clique $ K $ will have label 2. As we have chosen a fulfilling assignment in the manner described above, those connector vertices have maximal label at this point.
	
	Hence, it is possible to visit the whole clique $ K $ next. Note that after every vertex in $ K $ has been chosen, each vertex that has not been visited and that is neither a clause vertex nor $ t $ has a label larger or equal to 3, as each has three neighbors in the clique. On the other hand, every clause vertex can only have label at most 3, unless $ t $ is visited, as it has only three neighbors in $ G - (C\cup\{t\}) $. Thus, we can visit every vertex apart from the clause vertices and $ t $ first, then choose all the clause vertices in any order possible and finally visit $ t $ last.
\end{proof}

\begin{lemma}
	If $ \I $ does not admit a satisfying assignment, then $ t $ cannot be an MCS end-vertex of $ G(\I) $.
\end{lemma}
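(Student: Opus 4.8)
The plan is to prove the contrapositive: assuming that some MCS ordering $\sigma$ of $G(\I)$ ends in $t$, I will read a truth assignment $\A$ off the prefix of $\sigma$ and show that $\A$ must satisfy $\I$, contradicting the hypothesis that $\I$ has no satisfying assignment.

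First I would pin down the forced prefix of $\sigma$, mirroring the order constructed in the previous lemma and argued in the same spirit as Lemma~\ref{lemma:mns_assignment}. The vertices $s'$ and $s$ must be taken first, and then the search is confined to the literal and auxiliary vertices: a clause vertex cannot be chosen while some variable is still untouched, because its label is dominated by the literal vertices that are adjacent to everything seen so far, and the two overlapping four-cliques in each gadget of Figure~\ref{mcs:figure1} force the search to advance through the variables $x_1,\dots,x_k$ in order, selecting for each variable \emph{both} vertices of exactly one of its two literals before it can reach the next variable. (Selecting both literals of a variable only raises the labels of clause vertices, so it can never help $t$ become the end-vertex; I would rule this out along the way.) This determines $\A$ by declaring $x_i$ true precisely when the vertices of $x_i$, rather than of $\overline{x}_i$, are the ones visited in this prefix.

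The heart of the argument is a careful bookkeeping of MCS labels around the clique $K$, resting on two invariants. First, as long as no vertex of $K$ has yet been visited, every vertex of $K$ has label at most $2$, the value $2$ being attained only by a connector vertex and only once both vertices of $x_k$ (resp. $\overline{x}_k$) have been visited; this holds because each non-connector vertex of $K$ has a single neighbour outside $K$ while each connector has exactly two. Second, the label of $t$ always equals the number of already visited clause vertices, hence is at most $3l$. Combined with the gadget analysis, the first invariant shows that the search cannot enter $K$ until the traversal of all $k$ variables is complete, i.e. until $\A$ is fully determined. Now suppose $\A$ falsifies some clause $c$. At the moment the traversal finishes, which by the above precedes the first visit to $K$, all three negated-literal vertices attached to the three vertices of $c$ (see Figure~\ref{mcs:figure5}) have been visited, so each vertex of $c$ has label $3$, whereas every not-yet-visited literal, auxiliary and $K$-vertex has label at most $2$. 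Hence MCS is forced to take a vertex of $c$ before any vertex of $K$. Once one clause vertex is taken, the clique $C$ makes the remaining clause vertices dominate every vertex of $K$, which is still stuck at label at most $2$; thus the search must exhaust all $3l$ clause vertices while $K$ stays unvisited, after which $t$ has label $3l$ and is the unique maximum among the remaining vertices, all of which lie in $K$ with label at most $2$. MCS is therefore forced to take $t$ before $K$ is finished, contradicting that $t$ is the end-vertex. Consequently $\A$ satisfies every clause.

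The main obstacle is exactly the label bookkeeping of the last paragraph. One must verify that, throughout the forced prefix and during the clause cascade, the literal and auxiliary vertices, which are themselves adjacent to clause vertices through the blocks of Figure~\ref{mcs:figure5}, never acquire a label that would let the search slip into $K$ after a clause vertex has been taken, so that the first invariant genuinely survives until $t$ is cornered; in particular one has to bound the labels these vertices pick up from the clique $C$ during the cascade and check that they never beat the growing label of $t$ in a way that re-opens an entry into $K$. Establishing the forced left-to-right, one-literal-per-variable structure of the prefix purely from cardinality labels is the other delicate point, since MCS retains no ordering information beyond counts.
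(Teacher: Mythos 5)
Your overall strategy is the paper's: force the start at $s'$/$s$, show the search must traverse a full assignment before it can unlock the connector vertices of $K$, use unsatisfiability to force a label-$3$ clause vertex before $K$ is entered, and then let $t$'s growing label (it counts visited clause vertices) beat every $K$-vertex (label at most $2$) so that $t$ cannot be last. Two of your intermediate justifications would fail as written, though. First, the claim that ``a clause vertex cannot be chosen while some variable is still untouched, because its label is dominated by the literal vertices that are adjacent to everything seen so far'' is false: no literal vertex is adjacent to everything seen so far (the next variable's vertices see only the two auxiliary vertices of the preceding gadget), and the moment the third designated negated-literal neighbour of some clause is visited, that clause's three vertices sit at label $3$ while the best competing literal/auxiliary candidate is at $2$ or merely tied at $3$. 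So MCS may legitimately enter a clause triangle in the middle of the traversal, and your ``forced prefix'' breaks. The paper does not exclude clause vertices from the prefix at all; instead it proves the unconditional implication that \emph{whenever} a clause vertex is chosen before $K$ is entered, its two siblings (which carry the same external label plus one from the clique $C$) follow immediately, $t$ reaches label $3$, and hence $t$ precedes every vertex of $K$ --- combined with the separate fact that some clause vertex \emph{must} be chosen before $K$, this finishes the proof. Your argument is repaired by the same move, but you need to make it; you cannot get the clause-free prefix from label comparison alone.

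Second, the cascade at the end overreaches: after one clause vertex is taken, a remaining clause vertex may have label only $1$, which does not dominate a connector vertex at label $2$, so the search need not ``exhaust all $3l$ clause vertices,'' and the remaining unvisited vertices certainly do not all lie in $K$ (the unchosen literals and their auxiliary gadgets are still there). None of this is needed: your own invariants already give that once the first falsified clause triangle is completed, $t$ has label $3$ and every unvisited $K$-vertex has label at most $2$, so $t$ must precede the first entry into $K$ and cannot be the end-vertex. A minor further point: the traversal is not forced to pick \emph{exactly} one literal per variable --- both literals of a variable can be visited --- but ``at least one literal per variable, hence at least a full assignment'' is all the argument requires, since any falsified clause of that assignment still reaches label $3$.
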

\begin{proof}
	Suppose that $ \I $ does not admit a satisfying assignment and there is an MCS ending in $ t $. First observe that any maximum cardinality search that ends in $ t $ must begin in $ s $ or $ s' $. If we start in an arbitrary vertex that is neither $ s $ nor $ s' $, then at any point of the search there will always be a vertex with label larger or equal to 2 until $ s' $ is the only vertex left. Without loss of generality, it begins in $ s' $, as otherwise we can choose $ s' $ as the second vertex. It is enough to show that it is not possible to reach $ K $ before visiting vertex $ t $. First, note that $ K $ can only be entered through the connector vertices, which in turn are only adjacent to $ x_k $ or $ \overline{x}_k $, as at any point of the MCS there is a vertex with label $ \geq 2 $, while a vertex in $ K $ has label $ \leq 1 $ until the connector vertices are chosen.
	
	Suppose that at any point in the search a clause vertex is chosen for the first time before we have reached $ K $. Then either all of the vertices of that clause are visited consecutively, or the remaining vertices of that clause will be labeled with 3 before we reach $ K $ (they must have had label 2 before the clause was entered and the choice of one clause vertex increased the label to 3). In this case, therefore, $ t $ cannot be chosen last in the search, as at least three clause vertices, and as a consequence also $ t $, must be chosen before we can enter $ K $.
	
	It remains to be shown, that a clause vertex has to be visited before entering $ K $. We have already argued that one of the literals of $ x_k $ must be visited before we can enter $ K $, as we can only enter through the connector vertices. If we disregard the clause vertices and $ K $, then the vertex set corresponding to a given variable $ x_i $ with $ i<k $ forms a separator between $ s $ and the vertices corresponding to the variable $ x_k $. Therefore, any search must traverse at least one vertex of each variable. Furthermore, as soon as the first vertex belonging to a literal is chosen, the other vertex in that literal has the largest label and must be chosen next. Hence, at least one literal for every variable must be visited, before the search reaches $ x_k $ (note that it is possible to visit both literals of a variable). As a result, we need to traverse \emph{at least} a whole assignment $ \A $ of the literals before entering $ K $. However, as $ \I $ does not admit a satisfying assignment, this implies that by the time we have visited one literal of $ x_k $, at least one of the clause vertices must have label 3, as all its literals have been assigned in the negative, and we must visit this vertex before entering the clique $ K $. 
	
	Therefore, it is impossible to reach $ K $ before we have chosen a clause vertex and $ t $ cannot be an end-vertex of an MCS.
\end{proof}

The following corollary concludes the proof of Theorem \ref{mcs-end-vertex}.

\begin{corollary}
	Let $\I$ be an instance of 3-SAT. Then $ \I $ has a satisfying assignment if and only if $ t $ is a possible MCS end-vertex of $ G(\I) $.
\end{corollary}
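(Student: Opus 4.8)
The plan is to obtain the equivalence by combining the two lemmas established immediately above, and then to package this with the routine observations needed to complete an \NP-completeness proof. For the forward implication, I would assume $\I$ admits a satisfying assignment $\A$; the first of the two preceding lemmas exhibits an explicit MCS ordering of $G(\I)$ that ends in $t$, so $t$ is a possible MCS end-vertex. For the converse I would use the second lemma in contrapositive form: it asserts that if $\I$ has no satisfying assignment then no MCS on $G(\I)$ ends in $t$, so negating both sides, whenever $t$ is a possible MCS end-vertex of $G(\I)$ the instance $\I$ must be satisfiable. Since satisfiability is a two-valued property and the two lemmas treat exactly those two cases, these implications together yield the biconditional with no gap.

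To finish Theorem~\ref{mcs-end-vertex} it then remains to verify the two standard conditions. Membership in \NP was already observed after the statement of the end-vertex problem: a complete MCS ordering ending in $t$ is a polynomial-size certificate, and one checks in polynomial time that each chosen vertex has the maximum number of previously numbered neighbors and that $t$ is last. For \NP-hardness I would note that $G(\I)$ is constructible in polynomial time, since it consists of the $4k$ literal vertices, the $8(k-1)$ auxiliary vertices, the clause clique $C$ of size $3l$, the clique $K$ of size $3(4k+8(k-1))+4$, and the three distinguished vertices $s$, $s'$, $t$; hence both $|V(G(\I))|$ and $|E(G(\I))|$ are polynomial in $k$ and $l$. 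The map $\I \mapsto (G(\I),t)$ is therefore a polynomial-time many-one reduction from 3-SAT, and the corollary certifies its correctness, so that \NP-hardness (and, with membership, \NP-completeness) of the MCS end-vertex problem follows.

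The corollary itself presents no obstacle—it is essentially a bookkeeping step recording the equivalence in the form demanded by the reduction. The genuine difficulty lies entirely in the two lemmas, and most acutely in the second: one must argue that any MCS can only enter the clique $K$ through the connector vertices adjacent to the literals of $x_k$, that reaching those connectors forces a full variable assignment to be traversed first (because each variable gadget separates $s$ from $x_k$ once the clauses and $K$ are removed), and that the unsatisfiability of $\I$ then guarantees some clause vertex attains label $3$ and must be selected before $K$, which in turn forces $t$ to be visited before the search completes. That structural analysis, rather than the assembly performed here, is where the real work resides.
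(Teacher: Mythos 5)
Your proposal is correct and matches the paper's approach exactly: the corollary follows by combining the lemma for satisfiable instances with the contrapositive of the lemma for unsatisfiable instances, and the paper likewise treats it as an immediate consequence (stating it without a separate proof). Your additional remarks on \NP-membership and the polynomial size of $G(\I)$ are the same routine observations the paper relies on to conclude Theorem~\ref{mcs-end-vertex}.
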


\section{Linear Time Algorithms for some Chordal Graph Classes}

While the end-vertex problem of MCS is $ \NP $-complete on general graphs, we will now present linear time algorithms for the end-vertex problem on split graphs and unit interval graphs. With the used approaches we were also able to improve some polynomial results for the other searches to linear time algorithms. We begin with the following lemma which we will use repeatedly throughout.

\begin{lemma}\label{lemma:simplicial_linear}
Given a graph $G=(V,E)$ and a vertex $t \in V$. There is a linear time algorithm which decides whether $t$ is simplicial.
\end{lemma}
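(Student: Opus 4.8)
The plan is to test directly whether the open neighborhood $N(t)$ is a clique, using an adjacency-list representation of $G$ together with a Boolean marking array so as to avoid a naive pairwise comparison of all neighbors of $t$.

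First I would scan the adjacency list of $t$ to collect the vertices of $N(t)$ and compute $d := |N(t)|$; at the same time I mark every vertex of $N(t)$ in a Boolean array $\mathrm{inN}[\,\cdot\,]$ of length $n$. This phase costs $O(|N(t)|)$ time, plus an $O(n)$ initialization of the array. Next, for each vertex $u \in N(t)$ I would scan the adjacency list of $u$ and count the number $c_u$ of its neighbors $w$ with $\mathrm{inN}[w]=\text{true}$, that is, the number of neighbors of $u$ that also lie in $N(t)$. The algorithm reports that $t$ is simplicial exactly when $c_u = d-1$ holds for every $u \in N(t)$, and it reports that $t$ is \emph{not} simplicial as soon as some $u$ with $c_u \neq d-1$ is encountered.

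Correctness rests on an elementary equivalence: $N(t)$ is a clique if and only if every $u \in N(t)$ is adjacent to each of the other $d-1$ vertices of $N(t)$, which is precisely the condition $c_u = d-1$. Thus the acceptance criterion above is equivalent to $t$ being simplicial.

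The step that needs the most care is the running time bound, and this is where the chosen counting strategy matters. A direct verification of all pairs of neighbors of $t$ would cost $\Theta(d^2)$, which is not linear in general. Counting instead along the neighbor lists, the total work of the second phase is $\sum_{u \in N(t)} |N(u)| \le \sum_{v \in V} |N(v)| = 2m$, by a handshaking-type bound. Combined with the $O(n)$ array initialization and the $O(|N(t)|)$ first phase, the whole procedure runs in $O(n+m)$ time, which is linear, as required.
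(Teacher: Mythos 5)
Your proposal is correct and follows essentially the same approach as the paper: mark the neighbors of $t$, then for each $u \in N(t)$ scan its adjacency list and count marked neighbors, accepting exactly when every count is full. The only (cosmetic) difference is that your threshold $c_u = d-1$ for the open neighborhood is stated more carefully than the paper's phrasing, which compares against the degree of $t$.
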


\begin{proof}
At first, we mark each neighbor of $t$ with a special bit. Now for every neighbor $v$ of $t$ we count, how many neighbors of $v$ are also neighbors of $t$. Because of the special bits this can be done in $\O(|N(v)|)$. If for every neighbor of $t$ this number is equal to the degree of $t$, $t$ is simplicial. Otherwise it is not. The overall running time is linear in the size of $G$, since we visit each edge only a constant number of times.  
\end{proof}

\subsection{Split Graphs}\label{sec:mcs_split}

As split graphs are chordal, we know that Lemma~\ref{mns:lemma1} yields a necessary condition for being an MCS end-vertex. In Figure~\ref{fig:split} we present an example which shows that this condition is not sufficient. However, in the following, we show that a slight strengthening of this condition is enough for a complete characterization.

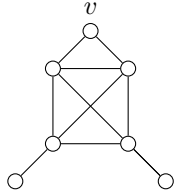
\begin{figure}
	\centering
	\begin{tikzpicture}[vertex/.style={inner sep=2pt,draw,circle}]
	\node[vertex] (1) at (0,0) {};
	\node[vertex] (2) at (1,0) {};
	\node[vertex] (3) at (0,1) {};
	\node[vertex] (4) at (1,1) {};
	\node[vertex, label=above:$ v $] (5) at (0.5,1.5) {};
	\node[vertex] (6) at (-0.5,-0.5) {};
	\node[vertex] (7) at (1.5,-0.5) {};
	
	\draw (1) -- (2);
	\draw (1) -- (3);
	\draw (1) -- (4);
	\draw (2) -- (4);
	\draw (2) -- (3);
	\draw (3) -- (4);
	\draw (1) -- (6);
	\draw (2) -- (7);
	\draw (3) -- (5);
	\draw (4) -- (5);
	\draw (2) -- (7);
	\end{tikzpicture}\caption{Vertex $v$ is an MNS end-vertex, since it fulfills the conditions of Lemma~\ref{mns:lemma1}. However, by enumerating all possible searches one can see that $v$ is not an MCS end-vertex.}\label{fig:split}
\end{figure}

\begin{theorem}\label{mcs:split}
	Let $G=(V,E)$ be a split graph. Then $t \in V$ is the last vertex of some MCS-ordering $\sigma$ of $G$ if and only if
	\begin{enumerate}
		\item The vertex $t$ is simplicial;
		\item The neighborhoods of the vertices with a smaller degree than $t$ are totally ordered by inclusion.
	\end{enumerate}
\end{theorem}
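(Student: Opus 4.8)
The plan is to prove both implications using the split partition $V = C \cup I$ into a clique $C$ and an independent set $I$, and to track the MCS labels over time. Throughout write $d = \deg(t)$ and $V_{<d} = \{v \in V \setminus \{t\} : \deg(v) < d\}$. Condition~1 is immediate: since every MCS ordering is an MNS ordering, if $t$ is an MCS end-vertex then it is an MNS end-vertex, and as split graphs are chordal, Lemma~\ref{mns:lemma1} forces $t$ to be simplicial. So only the interplay between the degree condition and MCS needs real work.

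The key structural observation I would isolate first is the following: in any MCS ordering $\sigma$ ending in $t$, every vertex of $V_{<d}$ must be numbered before the last neighbour of $t$ is numbered. Indeed, once all of $N(t)$ has been visited, the label of $t$ equals $d$ and stays $d$ until $t$ itself is chosen; any still-unvisited vertex $v$ has label at most $\deg(v)$, so a vertex of $V_{<d}$ would have label strictly smaller than $d$ and $t$ would be the unique maximum, i.e. $t$ would be forced before the end --- a contradiction. This pins down where the low-degree vertices may sit in $\sigma$ and is the engine for the necessity of Condition~2.

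For necessity of Condition~2 I would argue by contradiction. Suppose $u, w \in V_{<d}$ have incomparable neighbourhoods and pick $a \in N(u) \setminus N(w)$ and $b \in N(w) \setminus N(u)$. By the structural observation both $u$ and $w$ are numbered before $t$ saturates, and to number $u$ (respectively $w$) legally its private neighbour $a$ (respectively $b$) must already carry enough visited neighbours; tracing the label dynamics as in the example of Figure~\ref{fig:split}, visiting the neighbours needed to reach the second of $u, w$ necessarily pushes the clique vertices shared with $N(t)$ --- and hence the label of $t$ --- up to the point where $t$ becomes the unique maximum before both of $u, w$ are consumed. This yields the forbidden forced choice of $t$. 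Making this label accounting precise for an arbitrary split graph (rather than the concrete figure) is the step I expect to be the main obstacle, since one must show that incomparability always localizes into such a forcing configuration regardless of how the clique and the independent set interleave.

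For sufficiency I would exhibit an explicit MCS ordering. Using Condition~2, list $V_{<d} = \{v_1, \dots, v_r\}$ so that $N(v_1) \subseteq \cdots \subseteq N(v_r)$; the nesting lets one visit these vertices in layers, always keeping the next $v_i$ at a maximal label because every neighbour it needs has already been seen while processing the larger neighbourhoods. After all of $V_{<d}$ (and the clique vertices their neighbourhoods require) have been numbered, every remaining unvisited vertex other than $t$ has degree at least $d$, so these --- together with the rest of $C$, whose labels grow rapidly inside the clique --- can be scheduled so that at each step some vertex ties or beats the current label of $t$; simpliciality of $t$ guarantees $N(t)$ is a clique, so its vertices can be arranged to be numbered last among the neighbours of $t$ without ever letting $t$ jump ahead. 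Finishing with $t$ then gives a valid MCS ordering. Verifying that the maximal-label condition genuinely holds at every step of this schedule, separately for $t \in I$ and for $t \in C$, is the remaining bookkeeping.
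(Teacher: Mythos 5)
Your overall strategy coincides with the paper's: derive simpliciality from Lemma~\ref{mns:lemma1}, prove necessity of the inclusion condition by a label-domination argument that forces $t$ too early, and prove sufficiency by visiting the nested neighbourhoods in inclusion order. Your preliminary observation (no vertex of degree less than $\deg(t)$ can be selected once $N(t)$ is exhausted, since from then on $t$ strictly dominates it) is correct in substance, although "$t$ would be the unique maximum" is not literally true --- the correct conclusion is only that such a vertex can never be the maximum, hence must be numbered earlier.

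The genuine gap is in the necessity of Condition~2, and you flag it yourself. Your sketch does not contain the mechanism that makes the argument work, and the step you describe ("its private neighbour $a$ must already carry enough visited neighbours" before $u$ can be numbered) is not how MCS constrains the order --- nothing about $a$'s label is needed to select $u$. The actual chain, as in the paper, is: (a) fix a split partition in which $C$ is a \emph{maximal} clique and observe that the two incomparable vertices $u,w$ of degree less than $\deg(t)$ must lie in $I$, so $N(u),N(w)\subseteq C$; (b) if $u$ is visited first, then every still-unvisited vertex of $N(u)$ thereafter has a label strictly exceeding that of $w$ (it sees every visited neighbour of $w$, all of which lie in $C$, plus $u$ itself), so all of $N(u)$ precedes $w$; (c) once the private neighbour $a\in N(u)\setminus N(w)$ is visited, the same comparison shows every unvisited vertex of $C$ strictly dominates $w$, so all of $C$ precedes $w$; (d) with $C$ exhausted, $t$ has label $\deg(t)>\deg(w)\ge$ label of $w$, forcing $t$ before $w$ --- a contradiction. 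Steps (a)--(c), in particular the role of the maximal-clique partition in guaranteeing that $w$'s visited neighbours are always seen by every unvisited clique vertex, are exactly the "label accounting" you defer, and without them the claim that incomparability "localizes into a forcing configuration" is an assertion, not a proof. The sufficiency direction is essentially the paper's construction and its remaining bookkeeping is routine (one checks that after $N(v_i)$ is exhausted no unvisited vertex strictly beats $v_i$, and that after all low-degree vertices are placed the rest of $C$, then the high-degree independent vertices, then $t$ always have maximal label), so I would accept that half as a sketch; the necessity half needs the argument above to be supplied.
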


\begin{proof}
	As $ G $ is a split graph, we can assume that its vertex set can be partitioned into a clique $C$ and an independent set $I$ such that $ C $ is a maximal clique in $ G $, i.e., the neighborhood of each vertex in $I$ is a proper subset of the neighborhood of each vertex in $C$. 
	
	Let $ t $ be an end-vertex of MCS on $ G $. As $ G $ is chordal, we see that $ t $ must be simplicial. Now, assume the second condition does not hold. Then there are two vertices $v$ and $w$ with smaller degrees than $t$ whose neighborhoods are incomparable with regard to inclusion.  Without loss of generality  assume that $v$ is taken before $w$ in $\sigma$  in the MCS where  $ t $ is  an end-vertex. 
	
	We first show that $v$ and $w$  have to be elements of $I$. If $t$ is an element of $I$, then this is easy to see, as every vertex in $C$ has a higher degree than the vertices in $I$. If $t $ is an element of $ C$, then $t$ does not have any neighbor in $I$, as it is simplicial. Therefore, $v$ and $w$ cannot be elements of $C$, since otherwise their degree would not be smaller than the degree of $t$. 
	
	Now observe that all neighbors of $v$ are visited before $w$. Indeed, after we have taken the vertex $v$ in $\sigma$, the remainder of $N(v)$ has a larger label than the other vertices in $C \setminus N(v)$. 
	Since there is at least one neighbor of $v$ which is not a neighbor of $w$, the remainder of $C$ will always have larger label than $w$ after we have taken all neighbors of $v$. Hence, all vertices of $ C $ have to be visited before $w$ in $\sigma$. However, this is a contradiction, since from the moment where all the vertices in $ C $ have been visited, the label of $t$ is always greater than the label of $w$ and, thus, $t$ has to be chosen before $w$ in $\sigma$.
	
	Let us now assume that both conditions hold for $t$. We claim that the following ordering is a valid MCS search that has $ t $ as an end-vertex. We choose the neighborhoods of all vertices with lower degree than $t$ in the order of the inclusion ordering. Every time the complete neighborhood of such a vertex $u$ has been visited, we choose $u$ next. If $t $ is an element of $ C $, there are no remaining vertices of $I$ and we take the remaining vertices of $C$ in an arbitrary ordering, where $t$ is the last vertex. If $t $ is an element of $ I $ we take the remaining vertices of $C$ first. Since the neighborhood of $t$ is not greater than the neighborhood of any remaining vertex, we can visit $t$ last. 
\end{proof}

\begin{proposition}\label{prop:mcs_split_linear}
	The MCS end-vertex problem can be decided in linear time if the given input is a split graph.
\end{proposition}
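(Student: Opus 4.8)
The plan is to turn the structural characterization of Theorem~\ref{mcs:split} into an algorithm and to verify that each of its two conditions can be tested in time linear in $n+m$. First I would compute a split partition $V = C \cup I$, where $C$ is a maximal clique and $I$ an independent set; this is a standard linear-time preprocessing step (for instance via a degree-based characterization of split graphs, or by reversing an LBFS ordering). Having this partition at hand lets me refer to the degrees and neighborhoods used in the statement of the theorem without recomputing them.

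The first condition, that $t$ be simplicial, is handled directly by Lemma~\ref{lemma:simplicial_linear}, which already gives a linear-time simpliciality test. So the only real work lies in checking condition~(2): that the neighborhoods of all vertices of strictly smaller degree than $t$ form a chain under inclusion. The natural approach is to collect the set $S = \{ v \in V \mid \deg(v) < \deg(t) \}$, sort these vertices by degree (which can be done in linear time with counting sort, since all degrees lie in $\{0,\dots,n-1\}$), and then verify that $N(v_1) \subseteq N(v_2) \subseteq \dots$ holds along this sorted order. A chain of sets ordered by inclusion must be ordered by cardinality as well, so degree-sorting gives the only candidate order in which the inclusions could hold, and it then suffices to check each consecutive pair. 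Using the split structure one can argue that all relevant vertices of $S$ lie in $I$ whenever $t$ is simplicial with small-degree neighbors, which keeps the neighborhoods small and the comparisons cheap.

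The step I expect to be the main obstacle is performing all the consecutive subset tests $N(v_i) \subseteq N(v_{i+1})$ within an overall linear budget, since a naive pairwise comparison could cost $\O(\deg(v_i)+\deg(v_{i+1}))$ per pair and, more importantly, the neighborhoods could in principle overlap heavily. The standard device to overcome this is a single marking sweep: process the candidate vertices in increasing degree order, and maintain a marked copy of the current neighborhood, checking that the marks of $N(v_i)$ are a subset of those of $N(v_{i+1})$ before advancing. Since each inclusion in a genuine chain is witnessed by scanning the smaller set against a boolean marking array of the larger one, and since every edge incident to a vertex of $S$ is touched only a constant number of times across the whole sweep, the total cost is $\O(m)$. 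If at any comparison an element of $N(v_i)$ is found outside $N(v_{i+1})$, we reject; otherwise condition~(2) holds.

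Putting these pieces together, the algorithm computes the split partition, tests simpliciality of $t$ via Lemma~\ref{lemma:simplicial_linear}, and verifies the inclusion-chain condition by the degree-sorted marking sweep, each phase running in $\O(n+m)$ time. By Theorem~\ref{mcs:split} these two conditions are together necessary and sufficient for $t$ to be an MCS end-vertex, so the procedure decides the MCS end-vertex problem on split graphs in linear time.
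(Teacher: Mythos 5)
Your proposal is correct and follows essentially the same route as the paper: test simpliciality via Lemma~\ref{lemma:simplicial_linear}, sort the vertices of degree smaller than $\deg(t)$ with counting sort, and verify the chain of neighborhood inclusions between consecutive vertices with a single marking-array sweep that touches each edge a constant number of times. The only cosmetic differences are that you traverse the degree-sorted list in increasing rather than decreasing order and add an (unneeded but harmless) split-partition preprocessing step; neither affects correctness or the $\O(n+m)$ bound.
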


\begin{proof}
	By Lemma~\ref{lemma:simplicial_linear} the simpliciality of $t$ can be checked in linear time. To check the second condition, we first sort the vertices with degrees smaller than $t$ by their degree. This can be done in linear time using counting sort. Let $v_1, \ldots, v_k$ be this order, where $deg(v_1) \geq \ldots \geq deg(v_k)$. Then we create an array $A$ of size $n$, whose elements correspond to the vertices of $G$. At first we mark each element of $A$ which corresponds to a neighbor of $v_1$ with one. For each $v_i$ with $1 < i \leq k$ we check, whether all elements corresponding to a neighbor of $v_i$ are marked with $i-1$. If this is not the case, the neighborhoods of $v_{i-1}$ and $v_i$ are incomparable with regard to inclusion. Otherwise, we mark each neighbor of $v_i$ with $i$. The overall running time of this algorithm is $\O(n + m)$, since we visit each edge a constant number of times.
\end{proof}

Note, that the same approach can be used to improve the results of Charbit et al.~\cite{charbit2014influence} for the end-vertex problems of LBFS and LDFS on split graphs to linear running time. Furthermore, we can use it to improve the complexity of the end-vertex problem of MNS on chordal graphs. To decide this problem, we can check the conditions of Lemma~\ref{mns:lemma1}. As the minimal separators of a chordal graph can be determined in $\O(n + m)$ using MCS~\cite{kumar1998minimal}, the technique of Proposition~\ref{prop:mcs_split_linear} leads to a linear time algorithm.

\begin{corollary}\label{polynomial:lemma1}
The end-vertex problem of LBFS and of LDFS on split graphs can be solved in linear time. Furthermore, the end-vertex problem of MNS on chordal graphs can be solved in linear time.
\end{corollary}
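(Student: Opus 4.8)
The plan is to observe that all three statements share the same shape: a characterization of the end-vertices is already available---for LBFS and LDFS on split graphs from Charbit et al.~\cite{charbit2014influence}, and for MNS on chordal graphs from Lemma~\ref{mns:lemma1}---and in every case the characterizing condition splits into (i) a simpliciality test and (ii) a test that a certain family of vertex sets is totally ordered by inclusion. Both parts can be carried out in linear time with the tools already built in this section, namely Lemma~\ref{lemma:simplicial_linear} for the first and the counting-sort-plus-array-marking routine from the proof of Proposition~\ref{prop:mcs_split_linear} for the second. So the whole corollary reduces to fitting each known characterization into this template.

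First I would handle LBFS and LDFS on split graphs. I would recall the polynomial characterizations of~\cite{charbit2014influence} and verify that each is equivalent to ``$t$ is simplicial together with an inclusion-chain condition on the neighborhoods of a suitable vertex subset'', exactly as in Theorem~\ref{mcs:split}. Simpliciality is decided in $\O(n+m)$ by Lemma~\ref{lemma:simplicial_linear}. For the chain condition I would reuse the marking routine verbatim: sort the relevant vertices by degree with counting sort, mark the neighbors of the largest with label $1$, and for each successive vertex $v_i$ verify that all of $N(v_i)$ already carries the mark $i-1$ before overwriting it with $i$. Since a containment $N(v_i)\subseteq N(v_{i-1})$ between degree-sorted sets forces nesting, this certifies the total order by inclusion, and as every edge is touched a constant number of times the cost is $\O(n+m)$.

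Next I would treat MNS on chordal graphs. Here the characterization is Lemma~\ref{mns:lemma1}: $t$ must be simplicial and the minimal separators contained in $N(t)$ must be totally ordered by inclusion. Simpliciality is again Lemma~\ref{lemma:simplicial_linear}. For the separators I would invoke the algorithm of Kumar and Madhavan~\cite{kumar1998minimal}, which lists all minimal separators of a chordal graph in $\O(n+m)$, discard those not contained in $N(t)$, and feed the remainder to the same sorting-plus-marking routine---now sorting by cardinality rather than by degree---to decide whether they form a chain.

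The main obstacle I anticipate lies entirely in this last part: making the inclusion test genuinely linear for minimal separators rather than for neighborhoods. Two points need care. One must argue that the minimal separators contained in $N(t)$ have total size $\O(n+m)$, so that sorting them and running the marking pass stays within the edge budget; this rests on the linear-size output guarantee of~\cite{kumar1998minimal}. One must also check that sorting by cardinality and verifying consecutive containment really certifies a \emph{total} order by inclusion, handling ties in cardinality correctly---two equal-sized separators in a chain must coincide, which the marking test detects since it demands full containment. Once these bookkeeping points are settled, the three algorithms are immediate and the corollary follows.
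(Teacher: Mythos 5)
Your proposal matches the paper's argument: the paper likewise reduces all three cases to checking simpliciality via Lemma~\ref{lemma:simplicial_linear} and an inclusion-chain condition via the counting-sort-and-marking routine of Proposition~\ref{prop:mcs_split_linear}, invoking the characterizations of Charbit et al.\ for LBFS/LDFS on split graphs and Lemma~\ref{mns:lemma1} together with the $\O(n+m)$ minimal-separator computation of~\cite{kumar1998minimal} for MNS on chordal graphs. Your treatment is in fact more explicit than the paper's (which states the reduction in a single paragraph), and the bookkeeping points you flag are exactly the ones the paper leaves implicit.
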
 

\subsection{Unit Interval Graphs}

As any MCS (and also every LDFS) is an MNS, we know that in a chordal graph $ G=(V,E) $ a necessary condition for a vertex $ t \in V $ being an MCS (or LDFS) end-vertex is that $ t $ is simplicial and that the minimal separators in its neighborhood can be ordered by inclusion (as seen in Lemma~\ref{mns:lemma1}). In the following we will proceed to show that in unit interval graphs this is also a sufficient condition.

In \cite{fulkerson1965incidence} Fulkerson and Gross showed that a graph $ G $ is an interval graph if and only if the maximal cliques of $ G $ can be linearly ordered such that, for each vertex $ v $, the maximal cliques containing $ v $ occur consecutively. We will such an ordering a \emph{linear order of the maximal cliques}. It is possible to find such a linear order in linear time, as can be seen, for example, in~\cite{corneil2009lbfs} or~\cite{korte1989incremental}.

This property of the maximal cliques can also be expressed through a linear ordering of the vertices. It has been shown by numerous authors (for example by Olariu in~\cite{olariu1991optimal}) that a graph $ G=(V,E) $ is an interval graph if and only if it has an \emph{interval order}, i.e., an ordering $ \sigma $ of its vertices such that for $ u \prec_{\sigma} v \prec_{\sigma} w $, the existence of $ uw \in E $ implies the existence of the edge $ uv \in E $.

For unit interval graphs, Looges and Olariu~\cite{looges1993optimal} proved a similar result:

\begin{lemma}\cite{looges1993optimal}\label{mcs:lemma5}
	A graph $ G=(V,E) $ is a unit interval graph if and only if it has an ordering $ \sigma =(v_1, \ldots ,v_n)$ of its vertices such that for $ u \prec_{\sigma} v \prec_{\sigma} w $, the existence of $ uw \in E $ implies the existence of the edges $ uv \in E $ and $ vw \in E $. Consequently, for two indices $ i<j $ such that $ v_i v_j \in E $ the set of vertices $ \{v_i, v_{i+1} , \ldots v_{j-1},v_j\} $ forms a clique.	
\end{lemma}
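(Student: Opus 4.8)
The plan is to prove the two implications of Lemma~\ref{mcs:lemma5} separately, after first deriving the ``consequently'' clause from the main ordering property. For that clause, suppose $v_iv_j\in E$ with $i<j$ and take any indices $i\le p<q\le j$; applying the hypothesis to $v_i\prec_\sigma v_p\prec_\sigma v_j$ yields $v_pv_j\in E$, and applying it again to $v_p\prec_\sigma v_q\prec_\sigma v_j$ yields $v_pv_q\in E$, so $\{v_i,\dots,v_j\}$ is a clique. Thus the second sentence is a formal consequence of the first, and only the characterizing equivalence needs real work.

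For necessity, I would assume $G$ is a unit interval graph and fix a representation assigning to each vertex $v$ a unit interval $[\ell_v,\ell_v+1]$, then order the vertices by increasing left endpoint $\ell_v$ (perturbing so the $\ell_v$ are distinct). Given $u\prec_\sigma v\prec_\sigma w$ with $uw\in E$, one has $\ell_u\le\ell_v\le\ell_w$, and since $[\ell_u,\ell_u+1]$ meets $[\ell_w,\ell_w+1]$ also $\ell_w\le\ell_u+1$. Then $\ell_v\le\ell_w\le\ell_u+1$ forces $uv\in E$, while $\ell_w\le\ell_u+1\le\ell_v+1$ forces $vw\in E$. This is a short endpoint computation and presents no real difficulty.

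For sufficiency, suppose $\sigma=(v_1,\dots,v_n)$ satisfies the ordering property. The weaker implication ``$uw\in E\Rightarrow uv\in E$'' alone already makes $\sigma$ an interval order, so by the interval-order characterization recalled above $G$ is an interval graph. For each $i$ I set $r(i)=\max\{j:v_iv_j\in E\}$ (with $r(i)=i$ when $v_i$ has no later neighbor); the clique property shows the later neighbors of $v_i$ are exactly $v_{i+1},\dots,v_{r(i)}$, and a direct application of the ordering property shows $r$ is non-decreasing. I would then construct an indifference representation: assign real positions $p_1<p_2<\cdots<p_n$ and the unit intervals $[p_i,p_i+1]$, aiming for $v_iv_j\in E\iff |p_i-p_j|\le 1$. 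The governing constraints are $p_{r(i)}\le p_i+1$ (every later neighbor lies within distance one) together with $p_{r(i)+1}>p_i+1$ when $r(i)<n$ (the first later non-neighbor lies beyond distance one), and I would meet them by a greedy left-to-right assignment placing $p_i$ just above $\max\{p_{i'}+1:r(i')=i-1\}$ and at most $p_{l(i)}+1$, where $v_{l(i)}$ is the earliest neighbor of $v_i$.

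The main obstacle is precisely the feasibility of this assignment: proving that the lower bound forced on $p_i$ by the non-neighbors to its left never exceeds the upper bound $p_{l(i)}+1$ forced by its earliest neighbor, so that the admissible range stays nonempty while the $p_i$ remain strictly increasing. This is where the monotonicity of $r$ (and symmetrically of $l$) is essential, since it prevents the difference constraints from conflicting. An alternative route that avoids the explicit construction is to note that the full ordering property also forbids an induced claw: if $v_a$ were the centre of a claw with leaves $v_b\prec_\sigma v_c\prec_\sigma v_d$, then in case $v_a\prec_\sigma v_c$ the property applied to $v_a\prec_\sigma v_c\prec_\sigma v_d$ with $v_av_d\in E$ gives $v_cv_d\in E$, and in case $v_c\prec_\sigma v_a$ the property applied to $v_b\prec_\sigma v_c\prec_\sigma v_a$ with $v_bv_a\in E$ gives $v_bv_c\in E$, each contradicting the non-adjacency of the leaves. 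Hence $G$ is a claw-free interval graph, and by Roberts' characterization of unit interval graphs it is a unit interval graph.
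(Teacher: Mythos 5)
The paper does not prove this lemma at all---it is quoted verbatim from Looges and Olariu \cite{looges1993optimal}---so there is no in-paper argument to compare against; I can only assess your proof on its own terms, and it is correct. The derivation of the ``consequently'' clause and the necessity direction (order by left endpoints of a unit representation and do the two endpoint inequalities) are both fine. For sufficiency, your first route via an explicit greedy placement of the $p_i$ is left with an acknowledged gap (feasibility of the difference constraints), but your alternative route closes the argument completely: the weaker one-sided implication makes $\sigma$ an interval order, hence $G$ is an interval graph by Olariu's characterization already recalled in the paper, and the full two-sided implication rules out an induced claw (your two cases $v_a\prec_\sigma v_c$ and $v_c\prec_\sigma v_a$ exhaust the possibilities and each forces an edge between two leaves), so Roberts' theorem that claw-free interval graphs are exactly the unit interval graphs finishes the proof. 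I would simply drop the half-finished construction and present only the claw-free argument, which is shorter and self-contained given the two cited characterizations.
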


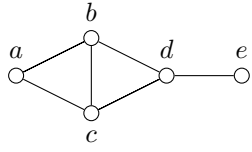
\begin{figure}
	\centering
	\begin{tikzpicture}[vertex/.style={inner sep=2pt,draw,circle}]
		\node[vertex, label=above:$ a $] (1) at (0,0) {};
		\node[vertex, label=above:$ b $] (2) at (1,0.5) {};
		\node[vertex, label=below:$ c $] (3) at (1,-0.5) {};
		\node[vertex, label=above:$ d $] (4) at (2,0) {};
		\node[vertex, label=above:$ e $] (5) at (3,0) {};
		
		\draw[] (1)--(2)--(3)--(1)--(2)--(4)--(3)--(4)--(5);
	\end{tikzpicture}
	\caption{The graph depicted here is a unit interval graph for which the orderings of MNS, LDFS and MCS differ. The order $ (b,c,d,a,e) $ is an MCS order that is not an LDFS order, while $ (b,c,d,e,a) $ is an LDFS order that is not an MCS order. Furthermore, the order $ (d,b,c,e,a) $ is an MNS order that is neither an MCS nor an LDFS order.}\label{mcs:figure6}
\end{figure}

We call such a linear vertex order a \emph{unit interval order}. Both a interval order as well as a unit interval order can be computed in linear time~\cite{corneil2004simple,corneil2009lbfs,korte1989incremental,looges1993optimal}.
Note that even in unit interval graphs MNS, MCS and LDFS can output different search orders. In Figure~\ref{mcs:figure6} we give an example of a unit interval graph for which there is an MCS order that is not an LDFS order and vice versa. Furthermore, this example gives an MNS order that is neither an MCS nor an LDFS order. However, the following theorem shows that the end-vertices of these searches are the same.

\begin{theorem}\label{polynomial:theorem1}
	Let $G=(V,E)$ be a unit interval graph and let $ t $ be a vertex of $ G $. Then the following statements are equivalent:
	\begin{enumerate}[i)]
		\item Vertex $ t $ is simplicial and $ G - N[t] $ is connected.
		\item Vertex $ t $ is the last vertex of some unit interval order.
		\item Vertex $ t $ is an end-vertex of MNS (MCS, LDFS).
	\end{enumerate}
\end{theorem}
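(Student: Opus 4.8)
The plan is to prove the cycle of implications (i)$\Rightarrow$(ii)$\Rightarrow$(iii)$\Rightarrow$(i), and to dispatch all three searches in (iii) simultaneously using the search hierarchy (every MCS order and every LDFS order is an MNS order). Before starting I would record two structural facts about a unit interval order $\sigma=(v_1,\dots,v_n)$ of our connected graph $G$, both immediate from Lemma~\ref{mcs:lemma5}. First, \emph{consecutive vertices are always adjacent}: if $v_kv_{k+1}\notin E$ then no edge could join $\{v_1,\dots,v_k\}$ to $\{v_{k+1},\dots,v_n\}$ and $G$ would be disconnected; hence every block of consecutive vertices induces a connected subgraph. Second, if $t$ is simplicial then $N[t]$ is a block $\{v_i,\dots,v_j\}$ of consecutive vertices (its extreme neighbours are adjacent, so the vertices between them form a clique containing $t$), and no edge joins the prefix $L=\{v_1,\dots,v_{i-1}\}$ to the suffix $R=\{v_{j+1},\dots,v_n\}$; consequently $G-N[t]$ is connected if and only if $N[t]$ meets an end of $\sigma$, i.e.\ $i=1$ or $j=n$.

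For (i)$\Rightarrow$(ii) I start from any unit interval order $\sigma$. By the second fact, connectedness of $G-N[t]$ forces $N[t]$ to sit at an end of $\sigma$; reversing $\sigma$ if necessary (a reversal is again a unit interval order) I may assume $N[t]=\{v_i,\dots,v_n\}$ is a suffix. Since $t$ is simplicial, its only neighbours lie in this clique, so I can delete $t$ from its current position and reinsert it at the very end. The only triples affected by this move have the form $a\prec b\prec t$ with $at\in E$, forcing $a\in N(t)$ and hence $b$ into the clique $N[t]$ as well, so $ab\in E$ and $bt\in E$ and the defining implication of Lemma~\ref{mcs:lemma5} is preserved. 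The result is a unit interval order ending in $t$.

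For (ii)$\Rightarrow$(iii) let $\sigma=(v_1,\dots,v_n=t)$ be a unit interval order. I claim $\sigma$ is itself an MCS order started in $v_1$: at step $k$, every already-numbered neighbour $v_r$ ($r<k$) of any later vertex $v_m$ ($m>k$) satisfies $v_r\prec v_k\prec v_m$ with $v_rv_m\in E$, so $v_rv_k\in E$ by Lemma~\ref{mcs:lemma5}; thus $v_k$ has at least as many numbered neighbours as any competitor and is a legal MCS choice. Hence $t$ is an MCS end-vertex and, as every MCS order is an MNS order, an MNS end-vertex too. For LDFS I use that in a unit interval order there is \emph{no} triple $a\prec b\prec c$ with $ac\in E$ and $ab\notin E$ (Lemma~\ref{mcs:lemma5} forces $ab\in E$); the forbidden-pattern characterisation of LDFS orderings of Corneil and Krueger~\cite{corneil2008unified} is therefore met vacuously, so $\sigma$ is also an LDFS order ending in $t$.

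For (iii)$\Rightarrow$(i) it suffices, again by the hierarchy, to treat the case where $t$ is an MNS end-vertex. As $G$ is chordal, Lemma~\ref{mns:lemma1} yields that $t$ is simplicial and that the minimal separators contained in $N(t)$ are totally ordered by inclusion. Suppose $G-N[t]$ were disconnected; then by the second structural fact both $L$ and $R$ are nonempty, with $v_{i-1}\in L$ and $v_{j+1}\in R$ non-neighbours of $t=v_p$. The sets $N(L)\cap N(t)$ and $N(R)\cap N(t)$ are, respectively, a nonempty initial segment $\{v_i,\dots,v_a\}$ with $a<p$ and a nonempty final segment $\{v_b,\dots,v_j\}$ with $b>p$ of the clique $N(t)$, hence disjoint; the first contains a minimal $(v_{i-1},t)$-separator and the second a minimal $(v_{j+1},t)$-separator, and these two minimal separators are nonempty and disjoint, so incomparable, contradicting Lemma~\ref{mns:lemma1}. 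Thus $G-N[t]$ is connected and (i) holds. I expect this last translation---turning a disconnection of $G-N[t]$ into two provably \emph{incomparable minimal} separators inside $N(t)$---to be the main obstacle, since it requires checking that the two separating sets contain genuine minimal separators (not merely separators) lying strictly on opposite sides of $t$; the remaining steps are either vacuous or a direct counting argument.
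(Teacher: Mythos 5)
Your proof is correct. For i)$\Rightarrow$ii) and iii)$\Rightarrow$i) you essentially follow the paper's route: the former is the same ``$N[t]$ occupies an end of the order, so slide $t$ to the last position'' argument, and the latter reaches the same contradiction with Lemma~\ref{mns:lemma1} via two nonempty, disjoint (hence incomparable) minimal separators inside $N(t)$ --- the only difference being that you extract the disjointness of the two candidate sets from the unit interval order, whereas the paper gets it from claw-freeness of unit interval graphs (a vertex $w\in N(t)$ with neighbours in two components of $G-N[t]$ would create an induced claw). The genuine divergence is in ii)$\Rightarrow$iii). The paper takes an MNS (MCS, LDFS) order agreeing with the unit interval order $\sigma$ on a longest possible prefix and derives a contradiction at the first disagreement, dispatching all three searches with one exchange argument. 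You instead prove the stronger statement that $\sigma$ itself is simultaneously an MCS order (your containment argument is sound: at step $k$ the numbered neighbourhood of every unnumbered vertex is contained in that of $v_k$, so $v_k$ is a legal greedy choice and, a fortiori, a legal MNS choice) and an LDFS order (via the Corneil--Krueger forbidden-pattern characterisation, which holds vacuously because the pattern $a\prec b\prec c$ with $ac\in E$ and $ab\notin E$ cannot occur in a unit interval order). This buys a cleaner and slightly stronger conclusion --- every unit interval order is itself an MCS, LDFS and MNS order --- at the cost of importing the LDFS ordering characterisation from \cite{corneil2008unified}, which the paper's argument does not need.
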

\begin{proof}
	To prove that i) implies ii), let $ \sigma=(v_1, \ldots , v_{n}) $ be a unit interval order of $ G $ and let $ t=v_l $. As $ t $ is simplicial, the closed neighborhood of $ t $ forms a maximal clique of $ G $. Thus, all vertices of $ N[t]$ must appear consecutively in $ \sigma $, i.e. $ N[t] =\{v_i, v_{i+1}, \ldots v_{j-1},v_j \} $ for some $ i \leq l \leq j $, due to Lemma~\ref{mcs:lemma5}. Suppose that $ i\neq 1 $ and $ j \neq n $. Then $ G - N[t]$ is disconnected, as there can be no edge between a vertex left of $ v_i $ and a vertex to the right of $ v_j $, again due to Lemma~\ref{mcs:lemma5}. This is a contradiction to the choice of $ t $. Therefore, we can suppose without loss of generality that $ j=n $. Because $ \{v_i, v_{i+1}, \ldots v_{j-1},v_j \} $ is a clique, it is easy to see that $ \sigma'=(v_1, \ldots ,v_{l-1}, v_{l+1}, \ldots v_n, v_l=t) $ is also a unit interval order.
	
	To prove that ii) implies iii), let $ \sigma=(v_1, \ldots , v_n) $ be a unit interval order of $ G $ and let $ t=v_n $. Let $ \sigma'=(w_1, \ldots , w_n) $ be an MNS (MCS, LDFS) order such that the first index $ k $ at which $ v_k \neq w_k $ is rightmost among all such search orders. This implies that at moment $ k-1 $ in the search the vertex $ w_k $ must have a larger label than $ v_k $. In other words, there is a vertex $ v_i $ with $ i < k $ such that $ v_i $ is adjacent to $ w_k $ but not $ v_k $. Since $ v_k \prec_{\sigma} w_k $ this is a contradiction to the fact that $ \sigma $ is a unit interval order.
	
	To show that iii) implies i) assume that $ G - N[t] $ is not connected. Let $ C_1 $ and $ C_2 $ be distinct connected components of $ G - N[t] $. Suppose there is a vertex $ w $ in $ N(t) $ that is adjacent to a vertex $ c_1 \in C_1 $ and a vertex $ c_2 \in C_2 $. Then $ t $, $ w $, $ c_1 $ and $ c_2 $ form an induced claw in $ G $ which is a contradiction to the fact that $ G $ is a unit interval graph~\cite{brandstaedt2000linear}. Therefore, the neighborhood of $ C_1 $ in $ N(t) $, say $ N_1 $, and the neighborhood of $ C_2 $ in $ N(t) $, say $ N_2 $ are disjoint. Both $ N_1 $ and $ N_2 $ form separators of $ G $ and, thus, each of these must contain a minimal separator. Therefore, the minimal separators in $ N(t) $ are not totally ordered by inclusion. As $ t $ is an MNS end-vertex, this is a contradiction to Lemma~\ref{mns:lemma1}.
\end{proof}

Since Condition i) in Theorem~\ref{polynomial:theorem1} can be decided in linear time, we can state the following corollary.

\begin{corollary}
	The MCS and LDFS end-vertex problem can be decided in linear time on unit interval graphs.
\end{corollary}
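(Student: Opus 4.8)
The plan is to read the corollary off Theorem~\ref{polynomial:theorem1} and then argue that its structural characterization can be verified in linear time. By that theorem, for a unit interval graph $G$ and a vertex $t$, being an end-vertex of MCS (and, equally, of LDFS) is equivalent to Condition i): $t$ is simplicial and $G - N[t]$ is connected. Hence deciding the MCS or LDFS end-vertex problem on a unit interval graph reduces to testing this purely structural condition on the input graph, and it suffices to show that Condition i) can be checked in $\O(n+m)$ time. Note that no unit interval order has to be computed for this; the two defining properties are read directly off $G$.

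First I would test the simpliciality of $t$, which by Lemma~\ref{lemma:simplicial_linear} is decidable in linear time. Second, to test connectivity of $G - N[t]$, I would scan the incidence list of $t$ to mark all vertices of $N[t]$ (including $t$ itself) in $\O(\deg(t))$ time; the unmarked vertices are then exactly $V \setminus N[t]$. Running a single BFS or DFS restricted to the unmarked vertices — starting from an arbitrary unmarked vertex and skipping every edge incident to a marked vertex — and checking whether it reaches all unmarked vertices decides connectivity of $G - N[t]$. Because the marking guarantees that the traversal never descends into $N[t]$, each vertex and edge is touched only a constant number of times, so this step also runs in $\O(n+m)$. Running the two independent tests in sequence therefore decides Condition i), and thus, by Theorem~\ref{polynomial:theorem1}, the MCS and LDFS end-vertex problem, in total time $\O(n+m)$.

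There is no deep obstacle here, since all of the combinatorial content resides in Theorem~\ref{polynomial:theorem1}; the only points requiring care are the degenerate case $N[t] = V$ (in which $G - N[t]$ is empty and hence vacuously connected, so the algorithm accepts) and the bookkeeping that keeps the restricted traversal of $G - N[t]$ from spending time exploring the marked part, which the initial marking step ensures.
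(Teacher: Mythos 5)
Your proposal is correct and follows exactly the paper's route: the paper derives the corollary from Theorem~\ref{polynomial:theorem1} by observing that Condition i) can be decided in linear time, and your argument simply fills in the (straightforward) details of that check via Lemma~\ref{lemma:simplicial_linear} and a marked graph traversal for the connectivity of $G - N[t]$.
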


For DFS there is a simple characterization of the end-vertices of (claw, net)-free graphs using hamiltonian paths. This also holds for unit interval graphs, as they form a subclass of (claw, net)-free graphs~\cite{brandstaedt2000linear}.

\begin{theorem}
	Let $G = (V, E)$ be a (claw, net)-free graph. Then $t \in V$ is the end-vertex of some DFS if and only if $t$ is not a cut vertex.
\end{theorem}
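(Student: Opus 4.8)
The plan is to prove the two directions separately, using the characterization of DFS end-vertices via hamiltonian paths in the subgraph obtained by deleting the remaining vertices. The forward direction is the trivial one: if $t$ is an end-vertex of some DFS, then I would use the known fact (referenced in the introduction via~\cite{charbit2014influence}) that no graph search can end on a cut vertex. Thus $t$ being a DFS end-vertex immediately forces $t$ not to be a cut vertex, and this direction needs no structural hypothesis on the graph at all.

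The substance lies in the reverse direction: assuming $t$ is not a cut vertex, I must construct a DFS that ends in $t$. The key observation is that a vertex $t$ is the end-vertex of some DFS precisely when $G - t$ is connected (so that $t$ is not a cut vertex) \emph{and} there is a way to explore $G$ so that $t$ is reached only after everything else has been exhausted. For DFS this reduces to finding a suitable spanning structure: concretely, I would aim to show that if $t$ is not a cut vertex, then $G - t$ admits a hamiltonian path whose endpoint is a neighbor of $t$, because then one can run a DFS along that path and finish at $t$. The natural strategy is therefore to invoke the structural theory of (claw, net)-free graphs, which is exactly the class in which hamiltonicity is well understood.

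The main obstacle will be establishing the existence of the required hamiltonian path in $G - t$ ending at a neighbor of $t$. Here I would lean on the classical result that a connected (claw, net)-free graph is \emph{traceable} (has a hamiltonian path); this is the well-known theorem of Duffus, Gould, and Jacobson, and it is precisely the property that distinguishes (claw, net)-free graphs and motivates their appearance in the theorem. The delicate point is not mere traceability but controlling the \emph{endpoint} of the path so that it is adjacent to $t$: I would argue that since $t$ is not a cut vertex, $G - t$ is connected and still (claw, net)-free (induced subgraphs of (claw, net)-free graphs are (claw, net)-free), hence traceable, and then use a local exchange or rerouting argument at $t$'s neighborhood to guarantee a hamiltonian path of $G-t$ terminating in $N(t)$. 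Appending $t$ to such a path yields a hamiltonian path of $G$ ending in $t$, and any hamiltonian path is realizable as a DFS order, completing the construction.

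Finally, I would assemble the two directions into the equivalence, remarking that the only properties of the class actually used are closure under induced subgraphs and the traceability guarantee, both of which are available for (claw, net)-free graphs and a fortiori for unit interval graphs. The risk to watch is the endpoint-control step, since generic traceability does not by itself fix where the hamiltonian path ends; if the rerouting argument at $N(t)$ proves awkward, an alternative is to phrase the whole argument in terms of a DFS directly, starting the search so as to defer $t$, and to verify inductively that connectivity of $G-t$ lets the search postpone $t$ until last.
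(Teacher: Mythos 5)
Your overall strategy coincides with the paper's: both directions hinge on the facts that no search can end at a cut vertex, that (claw, net)-freeness is hereditary, and that a connected (claw, net)-free graph is traceable, applied to $G-t$. However, you have introduced a requirement that is both unnecessary and left unproven, and you yourself flag it as the point of risk: you insist on a hamiltonian path of $G-t$ whose endpoint lies in $N(t)$, so that $t$ can be appended to form a hamiltonian path of $G$ ending at $t$. This endpoint control is exactly the step you do not carry out (a ``local exchange or rerouting argument'' is only gestured at), and plain traceability of $G-t$ gives you no handle on where the path terminates. As written, the reverse direction therefore has a gap.

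The fix is to drop the requirement entirely, which is what the paper does. Take \emph{any} hamiltonian path $P=(v_1,\dots,v_{n-1})$ of the connected (claw, net)-free graph $G-t$ (traceability as in~\cite{brandstaedt2000linear}). Running DFS along $P$ is legitimate since each $v_{i+1}$ is an unvisited neighbour of the current vertex $v_i$. Once $P$ is exhausted, the only unvisited vertex is $t$; DFS then backtracks along its stack until it reaches some $v_j$ adjacent to $t$ (which exists because $G$ is connected) and visits $t$ last. A DFS end-vertex need not be the terminus of a hamiltonian path of $G$ --- backtracking is precisely what makes the endpoint of $P$ irrelevant --- so the delicate step in your plan can simply be deleted, after which your argument matches the paper's.
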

\begin{proof}
	It is clear, that a cut vertex cannot be an end-vertex of a DFS, since it cannot
	be an end-vertex of the generic search~\cite{charbit2014influence}. It remains to show, that every other vertex $t$
	can be an end-vertex. Let $G' = G - t$. $G'$ is still a (claw, net)-free graph. Furthermore, it
	contains a hamiltonian path $P$ since it is connected~\cite{brandstaedt2000linear}. Thus, we can start the DFS in
	$G$ with $P$ and then take $t$ as the last vertex.
\end{proof}

\begin{corollary}
	The end-vertex problem of DFS can be decided in linear time on (claw, net)-free graphs, and, in particular, on unit interval graphs.
\end{corollary}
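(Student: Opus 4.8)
The plan is to reduce the decision problem directly to a cut-vertex test, using the characterization established in the preceding theorem. By that theorem, a vertex $t$ of a (claw, net)-free graph $G$ is the end-vertex of some DFS if and only if $t$ is not a cut vertex. Hence, to decide the DFS end-vertex problem on such a graph it suffices to decide whether $t$ is a cut vertex of $G$, which is a purely connectivity-theoretic question.

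First I would observe that, since $G$ is connected, the vertex $t$ is a cut vertex precisely when $G - t$ is disconnected. This can be tested in linear time by a single graph traversal: form $G - t$ (conceptually, by ignoring $t$ together with all edges incident to it) and run one BFS or DFS from an arbitrary vertex of $G - t$, then check whether every vertex other than $t$ has been reached. Such a traversal visits each remaining vertex and each remaining edge a constant number of times, so it runs in $\O(n + m)$ time; if all vertices are reached then $t$ is not a cut vertex and is therefore a DFS end-vertex, and otherwise it is not.

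Second, for the ``in particular'' clause I would simply invoke the fact, already recorded in the text, that every unit interval graph is (claw, net)-free~\cite{brandstaedt2000linear}. Consequently the same linear-time procedure applies verbatim to unit interval graphs, and no separate argument is required.

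I do not expect any genuine obstacle here: the entire substance lies in the preceding theorem, and once the equivalence ``$t$ is a DFS end-vertex $\iff$ $t$ is not a cut vertex'' is in hand, the corollary is immediate from the standard fact that testing whether one prescribed vertex disconnects a connected graph needs only a single linear-time search. The only point worth stating carefully is that one tests connectivity of $G - t$ directly, rather than computing the full set of articulation points of $G$; this keeps the argument elementary while still meeting the $\O(n + m)$ bound.
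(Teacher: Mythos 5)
Your proposal is correct and matches the paper's intent exactly: the corollary is stated without proof precisely because, once the preceding theorem reduces the question to ``is $t$ a cut vertex?'', a single linear-time connectivity check of $G-t$ settles it, and the ``in particular'' clause follows from unit interval graphs being (claw, net)-free~\cite{brandstaedt2000linear}. No gaps.
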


\subsection{Interval graphs}

In the same vein as for unit interval graphs, we can use a result by Kratsch, Liedloff and Meister~\cite{kratsch2015end} to characterize DFS end-vertices on interval graphs using hamiltonian paths.

\begin{lemma}\cite{kratsch2015end}\label{mcs:lemma8}
	Let $ G=(V,E) $ be a connected graph, and let $ t $ be a vertex of $ G $. Then $ t $ is an end-vertex of DFS if and only if there is a set $ X \subseteq V$ such that $ N[t] \subseteq X $ and $ G[X] $ has a hamiltonian path with endpoint $ t $.
\end{lemma}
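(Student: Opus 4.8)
The plan is to reduce the statement to the classical structural description of depth-first search trees. Recall two standard facts about a DFS and its search tree $T$ (rooted at the start vertex): first, a spanning tree of a connected graph $G$ is a DFS tree for some search if and only if it has \emph{no cross edges}, i.e.\ every edge of $G$ joins a pair of vertices one of which is an ancestor of the other in $T$; second, the last vertex discovered by a DFS is necessarily a leaf of $T$, and conversely, given any DFS tree $T$ we may freely choose the order in which the children of each vertex are explored, so \emph{any} leaf of $T$ can be made the last discovered vertex (explore, along the whole root-to-leaf path, the child lying on that path last). Combining these, I would first restate the claim as: $t$ is a DFS end-vertex of $G$ if and only if $G$ admits a DFS tree in which $t$ is a leaf. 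The rest of the argument then splits into translating a leaf into the desired set $X$, and conversely.

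For the forward direction, suppose $t$ is the end-vertex of a DFS with tree $T$ rooted at some vertex. Since $t$ is discovered last it is a leaf of $T$, so its unique tree edge goes to its parent, and every other edge of $G$ incident with $t$ is a non-tree edge. By the no-cross-edge property each such edge is a back edge, and as $t$ has no proper descendants it must join $t$ to one of its ancestors. Hence every neighbour of $t$ lies on the path $P$ from the root to $t$ in $T$. Taking $X$ to be the vertex set of $P$, I would observe that $N[t]\subseteq X$ and that $P$ itself is a Hamiltonian path of $G[X]$ with endpoint $t$, which is exactly what is required.

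For the converse I would build a suitable DFS tree explicitly. Let $u_1,\dots,u_{m-1},t$ be the given Hamiltonian path of $G[X]$, and let this path be the \emph{spine} of the tree, rooted at $u_1$, so that $t$ is the deepest spine vertex. Each connected component $D$ of $G-X$ is then attached as a subtree: let $u_p$ be the spine vertex of \emph{largest} index adjacent to $D$, pick a neighbour of $u_p$ in $D$ as the root of a DFS tree of the connected graph $G[D]$, and hang this subtree below $u_p$. I would then verify the no-cross-edge condition: the spine is a chain, so every edge of $G[X]$ joins two comparable vertices; since $N(t)\subseteq X$, no vertex of $V\setminus X$ is adjacent to $t$, so $t$ remains a leaf; every edge from a component $D$ to the spine reaches some $u_i$ with $i\le p$, which is an ancestor of $u_p$ by the maximality of $p$; edges inside $D$ are handled by its own DFS tree; and there are no edges between distinct components. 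Thus the constructed tree is a genuine DFS tree with $t$ as a leaf, and by the reduction above $t$ can be made the last discovered vertex.

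The only genuinely delicate point is this last construction: it is tempting, but wrong, to argue directly that one can ``run a DFS while avoiding $t$ until the end'', because a search may be forced to step onto $t$ while a non-neighbour of the current vertex is still unvisited, and then $t$ need not become a leaf. The Hamiltonian-path hypothesis is exactly what rules this out, since attaching every off-spine component below its deepest spine neighbour keeps all neighbours of $t$ (which lie on the spine) strictly above $t$ in the tree. I therefore expect the verification of the no-cross-edge property, and in particular the maximal-index attachment that guarantees $t$ stays a leaf, to be the main obstacle.
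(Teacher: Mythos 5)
The paper does not prove this lemma at all---it is quoted verbatim from Kratsch, Liedloff and Meister \cite{kratsch2015end}---so there is no in-paper argument to compare against. Judged on its own, your proof is correct and self-contained. You route everything through the classical characterization of rooted DFS trees as the spanning trees with no cross edges, reduce the end-vertex question to ``$t$ is a leaf of some DFS tree,'' and then translate a leaf into the root-to-$t$ path (forward direction) and a Hamiltonian path of $G[X]$ into a spine with the components of $G-X$ hung below their deepest spine neighbour (converse). The one step you state as a black box but should really justify is that \emph{any} leaf of a DFS tree can be made the last discovered vertex: this needs the observation that, thanks to the no-cross-edge property, whenever the current vertex still has an unvisited neighbour it also has an unvisited $T$-child (every neighbour is an ancestor, hence already visited, or a descendant lying in the subtree of some child, and a visited child's subtree is already exhausted), so the search is never forced off the prescribed tree and the children may indeed be explored in any order. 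You also correctly identify the genuinely delicate point in the converse, namely that each component of $G-X$ must be attached below its \emph{largest-index} spine neighbour---attaching it higher would create a cross edge to a deeper spine vertex. Apart from spelling out the trivial case $X=\{t\}$ and the one-line justification above, nothing is missing.
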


On interval graphs this characterization can be simplified to the following.

\begin{theorem}
	Let $ G $ be an interval graph. Then $ t \in V $ is the end-vertex of a DFS if and only if $ G[N(t)] $ contains a hamiltonian path.
\end{theorem}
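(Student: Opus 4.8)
The plan is to invoke Lemma~\ref{mcs:lemma8} for \emph{both} directions, so the task reduces to relating a hamiltonian path of $G[N(t)]$ to a hamiltonian path of $G[X]$ with endpoint $t$ for some $X \supseteq N[t]$. The easy direction is immediate and uses nothing about interval graphs: if $G[N(t)]$ has a hamiltonian path $v_1, \ldots, v_d$, then, since $t$ is adjacent to every vertex of $N(t)$, the sequence $(v_1, \ldots, v_d, t)$ is a hamiltonian path of $G[N[t]]$ ending in $t$. Choosing $X = N[t]$ in Lemma~\ref{mcs:lemma8} then certifies that $t$ is a DFS end-vertex.

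For the converse I would fix an interval representation $v \mapsto I_v = [l_v, r_v]$ of $G$ and write $I_t = [l_t, r_t]$, so that $N(t)$ is exactly the set of vertices whose interval meets $I_t$. By Lemma~\ref{mcs:lemma8} there is a set $X \supseteq N[t]$ together with a hamiltonian path $P$ of $G[X]$ whose last vertex is $t$. The goal is to delete from $P$ all vertices of $X \setminus N[t]$ and reconnect the surviving $N(t)$-vertices into a hamiltonian path of $G[N(t)]$. The single fact that makes this work is the following shortcut claim, which I expect to be the main obstacle and the only place the interval structure is genuinely needed: \emph{if $a,b \in N(t)$ are joined in $G[X]$ by a path all of whose internal vertices lie in $X \setminus N[t]$, then $ab \in E$.}

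To prove the claim I would first observe that every vertex $w \in X \setminus N[t]$ lies strictly to one side of $I_t$, i.e.\ either $r_w < l_t$ (left) or $l_w > r_t$ (right), and that a left vertex can never be adjacent to a right vertex, since adjacency would force $l_t \le r_t < l_{w'} \le r_w < l_t$. Hence the internal vertices of the given $a$--$b$ path all lie on one side, say all on the left. Then $a$ is adjacent to a left vertex, so $l_a \le r_t$ is refined to $l_a < l_t \le r_a$, which means $l_t \in I_a$; symmetrically $l_t \in I_b$. Thus $I_a \cap I_b \ni l_t$ and $a$ and $b$ are adjacent, proving the claim (the right-hand case is identical with $r_t$ in place of $l_t$).

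Finally I would carry out the reduction: reading $P$ from its start, discard any initial block of vertices in $X \setminus N[t]$, and replace each maximal internal block of vertices in $X \setminus N[t]$, bracketed by two vertices $a,b \in N(t)$, by the single edge $ab$, which exists by the claim. Since $t$ is the last vertex of $P$, no such block abuts $t$, and every vertex of $N(t)$ occurs exactly once; the surviving sequence is therefore a hamiltonian path of $G[N[t]]$ ending in $t$, and removing its last vertex $t$ yields a hamiltonian path of $G[N(t)]$, as required.
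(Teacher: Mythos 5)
Your proof is correct, and it shares the paper's skeleton: both directions go through Lemma~\ref{mcs:lemma8}, the forward direction (take $X=N[t]$) is identical, and the converse in both cases amounts to contracting the hamiltonian path of $G[X]$ down to one of $G[N[t]]$ ending in $t$. The mechanism for that contraction is genuinely different, though. The paper takes $X$ of \emph{minimum cardinality} satisfying the lemma and derives a contradiction from any vertex of $X\setminus N[t]$: a detour $v_j,\dots,v_k$ through non-neighbours of $t$, together with $t$, forms a cycle of length at least $4$, and chordality of interval graphs yields a chord avoiding $t$, which shortcuts the path and contradicts minimality. You instead fix an interval representation and prove outright the stronger ``shortcut claim'' that the two $N(t)$-endpoints of any such detour are adjacent (all internal vertices lie on one side of $I_t$, forcing $l_t$ resp.\ $r_t$ into both endpoint intervals), after which you excise all detours in one pass with no extremal argument. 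Your route uses more of the interval structure than the paper's (which only needs chordality for the chord-finding step, plus the fact that both bracketing vertices see $t$), but it buys a constructive, non-iterative reduction and makes explicit exactly where the geometry enters; the paper's version is shorter at the cost of leaving the ``as above, this contradicts minimality'' step slightly implicit. Both are sound.
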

\begin{proof}
	Suppose $ t \in V $ such that $ G[N(t)] $ contains a hamiltonian path. Then $ G[N[t]] $ must contain a hamiltonian path ending in $ t $ and by Lemma~\ref{mcs:lemma8} $ t $ is a DFS end-vertex of $ G $.
	
	Now assume that $ t $ is a DFS end-vertex. By Lemma~\ref{mcs:lemma8} there exists a set $ X \subseteq V $ such that $ N[t] \subseteq X $ and $ G[X] $ has a hamiltonian path with endpoint $ t $. Let $ X $ be the set of smallest cardinality that fulfills these properties. We claim that $ X=G[N[t]] $. 
	
	Let $ P=(v_1, \ldots , v_l) $ be the hamiltonian path of $ G[X] $ with endpoint $t$ (i.e. $v_l = t$). Suppose there is a vertex $ v_i \in X \setminus N[t] $ and let $ v_i $ be the leftmost such vertex in $ P $. If $ i=1 $, then $P' = (v_2, \ldots v_l)$ is a hamiltonian path in $ X \setminus v_i $ with endpoint $t$, in contradiction to the minimality of $ X $.
	
	Therefore, let $ j < i < k$ such that $ v_j $ is the rightmost vertex of $ N[t] $ to the left of $ v_i $ in $ P $ and $ v_k $ is the leftmost vertex of $ N[t] $ to the right of $ v_i $, i.e. $ v_{j+1}, \ldots , v_{k-1}  \in X \setminus N[t]$. If $ t $ is equal to $ v_k $, then $ v_jv_k \in E $ and $ P' = (v_1, \ldots ,v_j,v_k) $ is a hamiltonian path of $ X \setminus \{v_{j+1}, \ldots , v_{k-1}\} $; this is a contradiction to the minimality of $ X $. Otherwise, $ (t,v_j, \ldots ,v_k,t) $ forms a cycle of length $\geq 4$ in $ G $. As $ v_{j+1}, \ldots , v_{k-1} $ are not adjacent to $ t $, there must be an edge between some non-consecutive vertices in $(v_j, \ldots, v_k)$. As above, this is a contradiction to the minimality of $ X $.
\end{proof}

As the hamiltonian path problem can be solved on interval graphs in linear time, as was shown by Arikati and Rangan~\cite{arikati1990linear}, we can state the following corollary.

\begin{corollary}
	The end-vertex problem of DFS can be decided in linear time on interval graphs.
\end{corollary}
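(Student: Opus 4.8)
The statement to prove is the final corollary: the end-vertex problem of DFS can be decided in linear time on interval graphs. The plan is to reduce this directly to the characterization just established in the preceding theorem, namely that for an interval graph $G$, a vertex $t$ is a DFS end-vertex if and only if $G[N(t)]$ contains a hamiltonian path. Given this equivalence, the algorithmic task collapses to: extract the induced subgraph $G[N(t)]$ and test whether it admits a hamiltonian path. I would first observe that the neighborhood $N(t)$ can be read off in $\O(|N(t)|)$ time and the induced subgraph $G[N(t)]$ assembled in time linear in its own size, i.e., $\O(|N(t)| + |E(G[N(t)])|) \subseteq \O(n+m)$, since we only inspect each edge incident to $N(t)$ a constant number of times.

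The second key step is to invoke the result of Arikati and Rangan~\cite{arikati1990linear}, which solves the hamiltonian path problem on interval graphs in linear time. Here I would make the small but necessary observation that $G[N(t)]$ is itself an interval graph: interval graphs are closed under taking induced subgraphs (one simply restricts the interval representation to the vertices in $N(t)$), so the Arikati--Rangan algorithm applies verbatim to $G[N(t)]$. Running it decides the hamiltonian-path question in $\O(|V(G[N(t)])| + |E(G[N(t)])|) = \O(n+m)$ time.

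Combining these, the overall procedure is: build $G[N(t)]$, then run the linear-time interval-graph hamiltonian-path test on it, and answer \emph{yes} precisely when a hamiltonian path exists. By the preceding theorem this correctly decides whether $t$ is a DFS end-vertex, and every phase runs in linear time, giving the claimed $\O(n+m)$ bound. I do not expect any serious obstacle in this corollary, since it is essentially a bookkeeping step on top of two already-stated results; the only point requiring a sentence of care is confirming that $G[N(t)]$ inherits the interval-graph property and that an interval model for it (if the Arikati--Rangan routine needs one as input) can be obtained in linear time by restricting a model of $G$, which in turn is computable in linear time.
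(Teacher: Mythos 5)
Your proposal is correct and follows exactly the paper's (implicit) argument: apply the preceding theorem to reduce the question to testing whether $G[N(t)]$ has a hamiltonian path, and invoke the linear-time algorithm of Arikati and Rangan for hamiltonian paths in interval graphs. Your added remarks that $G[N(t)]$ inherits the interval-graph property and that its interval model can be extracted in linear time are the right bookkeeping details, which the paper leaves unstated.
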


The end-vertex problem for MCS on the class of interval graphs, however, appears to be more complicated than in the case of unit interval graphs, as implication $ iii) \Rightarrow i) $ from Theorem~\ref{polynomial:theorem1} does not necessarily hold here. As a result, not only simplicial vertices that are end-vertices of an interval order are candidates for being an MCS end-vertex. Furthermore, in interval graphs MCS is able to ``jump'' between non-consecutive cliques, making an analysis much harder. Figure~\ref{mcs:figure3} contains an example where an end-vertex $t$ is such that $ G- N[t] $ is disconnected.

\begin{figure}
	\centering
  \begin{minipage}{0.48\textwidth}
	\begin{tikzpicture}[vertex/.style={inner sep=2pt,draw,circle}]
	\node[vertex] (1) at (0,0) {};
	\node[vertex] (2) at (1,0) {};
	\node[vertex] (3) at (2,0) {};
	\node[vertex] (4) at (3,0) {};
	\node[vertex] (5) at (4,0) {};
	\node[vertex] (6) at (5,0) {};
	\node[vertex] (7) at (6,0) {};
	\node[vertex, label=right:$ t $] (8) at (3,1) {};
	
	\draw[] (1) -- (2) -- (3) -- (4) -- (8) --(4) --(5) --(6) -- (7);
  \draw[bend right, draw=none] (4) to (6);
	\end{tikzpicture}
  \end{minipage}
  \begin{minipage}{0.48\textwidth}
	\begin{tikzpicture}[vertex/.style={inner sep=2pt,draw,circle}]
	\node[vertex] (1) at (0,0) {};
	\node[vertex] (2) at (1,0) {};
	\node[vertex] (3) at (2,0) {};
	\node[vertex] (4) at (3,0) {};
	\node[vertex] (5) at (4,0) {};
	\node[vertex] (6) at (5,0) {};
	\node[vertex] (7) at (6,0) {};
	\node[vertex, label=right:$ t $] (8) at (3.5,1) {};
	
	\draw[] (1) -- (2) -- (3) -- (4) -- (8) --(5) --(6) -- (7);
  \draw[] (4) -- (5);
  \draw[bend right] (4) to (6);
	\end{tikzpicture}

  \end{minipage}\caption{Both graphs are interval graphs, but not unit interval graphs, since both contain an induced claw. On the left hand side the MCS end-vertex $ t $ is such that $ G - N[t] $ is disconnected. On the right hand side vertex $t$ is an MNS end-vertex but not an MCS end-vertex.}\label{mcs:figure3}
\end{figure}
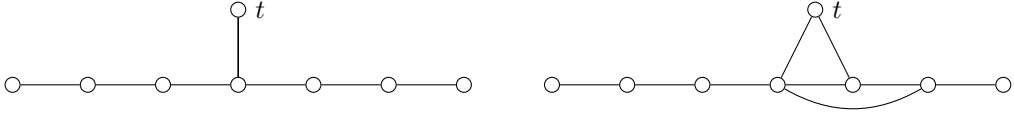

The lemma below gives a relaxed necessary condition which covers the cases similar to the one given in Figure~\ref{mcs:figure3}.

\begin{lemma}\label{mcs:lemma4}
	Let $G=(V,E)$ be an interval graph and let $ C_1, C_2, \ldots , C_k $ be a linear order of the maximal cliques of $ G $. Suppose $t \in V$ satisfies the following conditions:
	\begin{enumerate}
		\item $ t $ is simplicial.
		\item If $ C_i $ is the unique clique containing $ t $, then $i=1$ or $i=k$ or
		\begin{equation*}
		C_{i-1} \cap C_i \subseteq C_i \cap C_{i+1}
		\end{equation*}
		and
		\begin{equation*}
		|C_{i} \cap C_{i+1}| \leq |C_j \cap C_{j+1}| \text{ for all } j > i,
		\end{equation*}
		or the same holds for the reverse order $ C_k, C_{k-1}, \ldots , C_1 $.
	\end{enumerate}
	Then $ t $ is the end-vertex for some MCS on $ G $.
\end{lemma}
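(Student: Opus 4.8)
The plan is to exhibit an explicit MCS that first visits every vertex of $G$ other than the vertices \emph{private} to $C_i$, and only then sweeps these private vertices, taking $t$ last. Since $t$ is simplicial, $N[t]$ is a clique and hence a maximal clique, so $C_i=N[t]$ is genuinely the unique maximal clique containing $t$ and $t$ lies in no other clique. Writing $S_j:=C_j\cap C_{j+1}$ for the separators, I would set $D:=C_i\setminus S_i$ (the vertices whose only maximal clique is $C_i$); note $t\in D$, since $t$ has no neighbour outside $C_i$. By possibly reversing the clique order $C_1,\dots,C_k$ (which changes neither $G$ nor the end-vertex property) I may assume the first of the two symmetric hypotheses holds, namely $S_{i-1}\subseteq S_i$ together with $|S_i|\le|S_j|$ for all $j>i$; the cases $i=1$ and $i=k$ are the degenerate versions in which one side of $C_i$ is empty. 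Because $S_{i-1}\subseteq S_i$, we have $D=C_i\setminus(S_{i-1}\cup S_i)\cup\{t\}$, and the graph $G':=G-D$ is again an interval graph whose maximal cliques, in order, are $C_1,\dots,C_{i-1},C_{i+1},\dots,C_k$ (the trace $S_i$ of $C_i$ on $G'$ is contained in $C_{i+1}$ and is no longer maximal), with $C_{i-1}$ and $C_{i+1}$ meeting exactly in $S_{i-1}$.

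I would then define $\sigma$ by first running the canonical \emph{clique-path sweep} of $G'$: start in a vertex of $C_1$ and repeatedly take an unvisited vertex of the leftmost clique that still contains one. This visits the cliques of $G'$ consecutively and is a valid MCS on $G'$, since the consecutive-clique property (the linear order of the maximal cliques from Fulkerson and Gross) guarantees that while the sweep is entering a clique $C_p$, its still-unvisited vertices carry a label equal to $|\text{visited}\cap C_p|$, which dominates the label of every vertex lying in cliques strictly to the right. After $G'$ is exhausted I append the vertices of $D$ in an arbitrary order ending with $t$.

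The crux is to show that $\sigma$ is a legal MCS on all of $G$, i.e.\ that no vertex of $D$ is forced to be chosen too early. Throughout the $G'$-sweep every $d\in D$ is adjacent exactly to $C_i$, so its label equals $|\text{visited}\cap S_i|$. The needed estimate is the \emph{nesting} relation, a direct consequence of the consecutive-clique property: $S_i\cap C_m\subseteq S_m$ for $m\le i-1$ and $S_i\cap C_m\subseteq S_{m-1}$ for $m\ge i+1$. During the left part of the sweep the visited part of $S_i$ coincides with the visited part of $S_{i-1}$ (this is exactly where $S_{i-1}\subseteq S_i$ is used), and nesting bounds its cardinality by the current frontier label $|\text{visited}\cap C_p|$, even while the sweep sits in the middle of a clique. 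Once the sweep crosses into $C_{i+1}$ we use $S_i\subseteq C_{i+1}$ to get $|\text{visited}\cap S_i|\le|\text{visited}\cap C_{i+1}|$; and for every later clique $C_p$ with $p\ge i+2$ all of $S_i$ is already visited, so each $d\in D$ has the constant label $|S_i|$, while the frontier label is at least $|S_{p-1}|\ge|S_i|$ by the minimality hypothesis. In all cases some $G'$-vertex carries a label at least as large as every vertex of $D$, so the sweep is a valid prefix of an MCS on $G$. Finally, when $G'$ is used up the only unvisited vertices are those of $D$, all with label $|S_i|$; as $D\subseteq C_i$ induces a clique, taking them one at a time keeps the remaining ones tied, so $t$ may legitimately be chosen last.

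I expect the main obstacle to be making the frontier-domination estimate fully rigorous: one must track the label $|\text{visited}\cap S_i|$ of $D$ not only at the moments the sweep finishes a clique but also while it is partway into one, and separately confirm the endpoint cases $i\in\{1,k\}$, where the size hypothesis $|S_i|\le|S_j|$ is vacuous and the required bound must be extracted from the nesting relation alone.
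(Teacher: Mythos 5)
Your construction is correct and is essentially the paper's own proof: both sweep the maximal cliques from $C_1$ to $C_k$ while skipping the vertices private to $C_i$, using $C_{i-1}\cap C_i\subseteq C_i\cap C_{i+1}$ to jump from $C_{i-1}$ to $C_{i+1}$ and the separator-size minimality $|C_i\cap C_{i+1}|\leq|C_j\cap C_{j+1}|$ to keep the skipped vertices' labels dominated, then finish with the simplicial vertices of $C_i$ taking $t$ last. Your write-up merely makes explicit the label-domination estimates that the paper asserts without detail.
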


\begin{proof}
	As $ G $ is an interval graph, we can assume that there is a linear order $ C_1, C_2, \ldots , C_k $ of the maximal cliques of $ G $, in the sense that for any vertex $ v \in V $ all cliques containing $ v $ are consecutive. Also, it is easy to see that any vertex is simplicial if and only if it is contained in exactly one of these maximal cliques.
	
	We assume that $ t $ fulfills both properties. We now construct an MCS search ordering which starts in a simplicial vertex of $ C_1 $ and ends in the vertex $ t $. Without loss of generality, we  assume that the second property holds for $ C_1, C_2, \ldots , C_k $, as otherwise we can just begin our search at $ C_k $ and use the same arguments.
	
	We proceed by visiting the maximal cliques of $ G $ consecutively -- choosing the vertices of each clique in an arbitrary order -- until we have completely visited $ C_{i-1} $. As $ C_{i-1} \cap C_i \subseteq C_i \cap C_{i+1} $, we can visit a vertex of $ C_{i+1} \setminus C_{i} $ next, ignoring the simplicial vertices of $ C_i $. Due to the fact that $ |C_{i} \cap C_{i+1}| \leq |C_j \cap C_{j+1}| $ for all $ j > i $ we can then visit all vertices of $ C_{i+1} $ and the vertices of all other maximal cliques apart from $ C_i $. Finally, we visit the remaining simplicial vertices of $ C_i $, choosing $ t $ last.
\end{proof}

Note that the sufficient condition given by Lemma~\ref{mcs:lemma4} may not be necessary in general. For example, Figure~\ref{mcs:figure4} shows an example of a graph where the second condition from Lemma \ref{mcs:lemma4} is not true for vertex $t$, however it is still an MCS end-vertex. Nonetheless, we believe that it is still possible to give a characterization of MCS end-vertices for the family of interval graphs which may be checked efficiently.

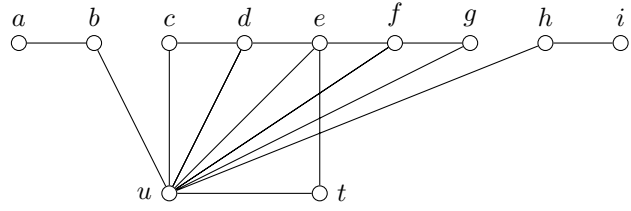
\begin{figure}
	\centering
	\begin{tikzpicture}[vertex/.style={inner sep=2pt,draw,circle}]
	\node[vertex, label=above:$ a $] (1) at (0,2) {};
	\node[vertex, label=above:$ b $] (2) at (1,2) {};
	\node[vertex, label=above:$ c $] (3) at (2,2) {};
	\node[vertex, label=above:$ d $] (4) at (3,2) {};
	\node[vertex, label=above:$ e $] (5) at (4,2) {};
	\node[vertex, label=above:$ f $] (6) at (5,2) {};
	\node[vertex, label=above:$ g $] (7) at (6,2) {};
	\node[vertex, label=above:$ h $] (8) at (7,2) {};
	\node[vertex, label=above:$ i $] (9) at (8,2) {};
	
	\node[vertex, label=left:$ u $] (u) at (2,0) {};
	
	\node[vertex, label=right:$ t $] (t) at (4,0) {};
	
	\draw[] (1) -- (2) -- (u) -- (3) -- (4) -- (u) --(4) -- (5)-- (t) -- (u) -- (5)  -- (6) -- (u) -- (6) -- (7) -- (u) --(8) --(9);
	\end{tikzpicture}\caption{An example of an interval graph $ G $ and a vertex $ t \in V(G) $, where $ t $ is an MCS end-vertex of the search $ (a,b,u,h,i,c,d,e,f,g,t) $. However, it is easy to see that for the linear order of the maximal cliques of the form $ C_1=\{a,b\} $, $ C_2=\{b,u\} $, $ C_3 =\{c,d,u\} $, $ C_5 =\{ d,e,u\} $, $ C_6=\{e,t,u\} $, $ C_7 =\{e,f,u\} $, $ C_8=\{f,g,u\} $, $ C_9=\{u,h\} $ and  $C_{10}=\{h,i\} $ the second condition of Lemma~\ref{mcs:lemma4} is not fulfilled. In fact, this condition is false for any linear order of the maximal cliques.}\label{mcs:figure4}
\end{figure}

\begin{conjecture}
	The MCS end-vertex problem can be decided in polynomial time if the given input is an interval graph.
\end{conjecture}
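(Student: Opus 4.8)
The plan is to upgrade the two partial results already available — the necessary condition from Lemma~\ref{mns:lemma1} (every MCS end-vertex is simplicial, with the minimal separators in its neighbourhood totally ordered by inclusion) and the sufficient condition of Lemma~\ref{mcs:lemma4} — into a single exact characterization that can be tested by a dynamic program running along the linear order $C_1,\ldots,C_k$ of the maximal cliques. Throughout I would fix a candidate $t$, reject immediately (via Lemma~\ref{lemma:simplicial_linear}) if it is not simplicial, let $C_i$ be the unique maximal clique containing $t$, and represent each vertex $v$ by the consecutive interval $[\ell_v,r_v]$ of clique indices containing it.

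First I would isolate \emph{when} $t$ can be forced prematurely. Since $t$ is simplicial, $N(t)=C_i\setminus\{t\}$ is a clique, so every still-unvisited $w\in C_i\setminus\{t\}$ is adjacent to all visited neighbours of $t$; hence the MCS count of such a $w$ is always at least that of $t$. Consequently $t$ can never be the \emph{strict} unique maximum, and therefore can never be forced, as long as a single vertex of $N(t)$ remains unvisited. Because $N(t)$, once exhausted, stays exhausted, this shows that the only obstruction lives in a well-defined final phase of the search, beginning when the last vertex of $N(t)$ is taken; during that phase $t$ has its maximal count $|C_i|-1$, and at every step before the last one must still have some non-$t$ vertex whose count is at least $|C_i|-1$. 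I expect this reformulation to absorb both Lemma~\ref{mcs:lemma4} and the exceptional behaviour of Figure~\ref{mcs:figure4}, where a long ``hub'' vertex keeps the counts on the far side of $C_i$ high enough to postpone $t$.

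Second I would prove a canonical-form lemma restricting the search orders one needs to consider. The difficulty, made explicit by Figure~\ref{mcs:figure4}, is that an MCS may jump between non-consecutive cliques, so the visited set is not simply a prefix and a suffix of the clique path. The goal of an exchange argument is to show that whenever $t$ is an end-vertex there is a witnessing order in which the visited set always consists of a bounded number of ``clique intervals'' — ideally a left block $C_1,\ldots,C_a$, a right block $C_b,\ldots,C_k$, and at most one extra active block created by jumping over $C_i$ — together with a controlled set of partially visited separators. With such a lemma in hand, the state of a dynamic program along the clique path need only record the endpoints of these blocks and, for the $O(1)$ boundary separators, how many of their vertices have already been taken; each transition appends the next admissible vertex and is checkable by comparing the induced counts, giving a polynomial number of states and hence a polynomial-time decision procedure.

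The step I expect to be the genuine obstacle is precisely this canonical-form lemma: bounding the \emph{number} of active regions and showing that an $O(1)$-size summary of the MCS counts (rather than the exact count of every vertex, which can be as large as $n$) suffices to decide which vertex may legally be chosen next. The counts are global quantities depending on the entire visited set, so one must argue that, within an interval representation, the relevant comparisons at each boundary are governed only by the separator sizes $|C_j\cap C_{j+1}|$ and the few partially processed cliques — an interval-graph analogue of the conditions $C_{i-1}\cap C_i\subseteq C_i\cap C_{i+1}$ and $|C_i\cap C_{i+1}|\le|C_j\cap C_{j+1}|$ of Lemma~\ref{mcs:lemma4}. If the number of active regions cannot be bounded by a constant, the fallback is to permit polynomially many regions and absorb them into a higher-dimensional but still polynomial DP table, at the cost of a larger exponent.
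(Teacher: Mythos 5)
This statement is posed in the paper as an open \emph{conjecture}: the authors give only a necessary condition (Lemma~\ref{mns:lemma1}), a non-matching sufficient condition (Lemma~\ref{mcs:lemma4}), and the counterexample of Figure~\ref{mcs:figure4} showing the gap between them. There is no proof in the paper to compare against, so your proposal would have to stand on its own as a complete argument --- and it does not. Your first observation is sound: since $t$ is simplicial, $N(t)=C_i\setminus\{t\}$ is a clique, so any unvisited $w\in N(t)$ has MCS count at least that of $t$, and hence $t$ need never be chosen before $N(t)$ is exhausted; the whole difficulty is confined to the final phase in which $t$ carries its maximal count $|C_i|-1$ and some other vertex must match it at every remaining step. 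That is a useful reduction, but it is only a reformulation of the problem, not a solution.

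The genuine gap is the one you yourself flag: the canonical-form lemma. Without a proof that some witnessing MCS order keeps the visited set describable by $O(1)$ clique intervals plus $O(1)$ partially visited separators, the dynamic program has no well-defined polynomial state space. The MCS count of a vertex is a global quantity over the entire visited set, and the example of Figure~\ref{mcs:figure4} shows exactly the kind of long ``hub'' vertex that lets the search interleave far-apart cliques; nothing in your sketch rules out that an adversarial instance requires unboundedly many active regions, or that deciding admissibility of the next vertex requires remembering more than separator sizes. Your fallback of ``polynomially many regions absorbed into a higher-dimensional DP table'' does not rescue this: if the number of regions is $\Theta(n)$ and each region's boundary needs even a single counter, the state space is exponential, not polynomial. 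As written, the proposal is a reasonable research plan whose central lemma is conjectured rather than proved, so it does not establish the statement.
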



\section{Conclusion} 

We have shown that the end-vertex problem is \NP-complete for MNS on weakly chordal graphs and for MCS on general graphs. Moreover, we have given linear time algorithms to compute end-vertices for LDFS and MCS on unit interval graphs as well as for DFS on interval graphs and for MCS on split graphs. Using the same techniques, we were able to improve the analyses of running times of various previous results from polynomial time to linear time. A complete list of the achieved results can be found in Table~\ref{tab:results}. However, many open questions still remain.

For all the searches investigated here, apart from DFS, the complexity of the end-vertex problem on chordal graphs is still open. This is especially interesting, as nearly all of these problems are already hard on weakly chordal graphs. Even on interval graphs the complexity for some end-vertex problems is open.

Besides the complexity results for the end-vertex problem on graphs with bounded chordality, there are further results on bipartite graphs. Charbit, Habib, Mamcarz~\cite{charbit2014influence} showed, that it is \NP-complete for BFS. Gorzny and Huang~\cite{gorzny2017end} showed the same result for LBFS. Furthermore, they present a polynomial time algorithm for the LBFS end-vertex problem on AT-free bipartite graphs. It is an open question, whether these results can be extended to the end-vertex problems of MCS and MNS. 

As mentioned in the introduction, the recognition of search trees of BFS and DFS is easy~\cite{hagerup1985,korach1989,manber1990}, although the corresponding end-vertex problems are hard. In~\cite{beisegel2018recognizing} the authors show that LDFS-trees can be recognized in polynomial time, while the recognition of LBFS-trees is \NP-complete on the class of weakly chordal graphs. Furthermore, they showed that recognition of both MNS and MCS-trees is \NP-complete. However, it remains an interesting question, whether there exists a search and a graph class, such that recognition of the search trees on that class is hard, whereas the end-vertex problem is polynomial.

	\acknowledgements{The work of this paper was done in the framework of a bilateral project between Brandenburg University of Technology and University of Primorska, financed by German Academic Exchange Service and the Slovenian Research Agency (BI-DE/17-19-18).
 Matja\v{z} Krnc gratefully acknowledges the European Commission for funding the InnoRenew CoE project (Grant Agreement \#739574) under the Horizon2020 Widespread-Teaming program and the Republic of Slovenia (Investment funding of the Republic of Slovenia and the European Union of the European Regional Development Fund).
 Nevena Piva\v c was funded by the Young Researcher Grant, financed by the Slovenian Research Agency (ARRS).
 
The authors would like to thank the anonymous referees for their many helpful comments, especially for suggesting a simplification of Theorem~\ref{polynomial:theorem1}.}
  
	\bibliographystyle{plain}
	\bibliography{end-vertex}

\end{document}